\documentclass[onefignum,onetabnum]{siamart250211}

\usepackage{lipsum}
\usepackage{amsfonts}
\usepackage{graphicx}
\usepackage{epstopdf}
\ifpdf
  \DeclareGraphicsExtensions{.eps,.pdf,.png,.jpg}
\else
  \DeclareGraphicsExtensions{.eps}
\fi

\usepackage{mathrsfs}
\usepackage{amsmath}
\usepackage{indentfirst}
\usepackage{amssymb}
\usepackage{bm}
\usepackage{subfig}
\usepackage{multirow}
\usepackage{array}
\usepackage{graphicx}
\usepackage{makecell}
\usepackage{enumerate}

\usepackage{algorithm}
\usepackage{algpseudocode} 

\DeclareMathOperator{\supp}{supp}
\DeclareMathOperator{\dist}{dist}

\algrenewcommand\algorithmicrequire{\textbf{Input:}}
\algrenewcommand\algorithmicensure{\textbf{Output:}}

\def\XXint#1#2#3{{\setbox0=\hbox{$#1{#2#3}{\int}$ }
\vcenter{\hbox{$#2#3$ }}\kern-.6\wd0}}

\newsiamremark{remark}{Remark}
\newsiamremark{assumption}{Assumption}
\newsiamremark{hypothesis}{Hypothesis}
\crefname{hypothesis}{Hypothesis}{Hypotheses}
\newsiamthm{claim}{Claim}
\newsiamremark{fact}{Fact}
\crefname{fact}{Fact}{Facts}

\headers{Higher-Order Barren Plateau}{Y. Huang and Y. Wang}

\title{Analysis of Hessian Scaling for Local and Global Costs in Variational Quantum Algorithms \thanks{Submitted to the editors DATE.  
\funding{}}}

\author{Yihan Huang\thanks{Department of Mathematics, Faculty of Science, National University of Singapore, 10 Lower Kent Ridge Road, Singapore.}
\and Yangshuai Wang\thanks{Corresponding author. Department of Mathematics, Faculty of Science, National University of Singapore, 10 Lower Kent Ridge Road, Singapore
  (\email{yswang@nus.edu.sg}).}}

\theoremstyle{plain}

\theoremstyle{definition}

\DeclareMathOperator{\Var}{Var}
\DeclareMathOperator{\Cov}{Cov}
\DeclareMathOperator{\Tr}{Tr}
\newcommand{\E}{\mathbb{E}}
\newcommand{\R}{\mathbb{R}}
\newcommand{\Z}{\mathbb{Z}}
\newcommand{\bth}{\boldsymbol{\theta}}
\newcommand{\ej}{\mathbf{e}_j}
\newcommand{\ek}{\mathbf{e}_k}

\newcommand{\Hjj}{H_{jj}}
\newcommand{\Hjk}{H_{jk}}

\usepackage{amsopn}

\ifpdf
\hypersetup{
  pdftitle={Analysis of Hessian Variance Scaling for Local and Global Cost Functions in Variational Quantum Algorithms},
  pdfauthor={[Author One], [Author Two], [Author Three]}
}
\fi

\usepackage{xcolor}

\theoremstyle{plain}

\theoremstyle{definition}

\definecolor{yscol}{HTML}{6622AA}

\begin{document}

\maketitle

\begin{abstract}
Barren plateaus in variational quantum algorithms are typically described by gradient concentration at random initialization. In contrast, rigorous results for the Hessian, even at the level of entry-wise variance, remain limited. In this work, we analyze the scaling of Hessian-entry variances at initialization. Using exact second-order parameter-shift identities, we write $H_{jk}$ as a constant-size linear combination of shifted cost evaluations, which reduces $\Var_\rho(H_{jk})$ to a finite-dimensional covariance--quadratic form. For global objectives, under an exponential concentration condition on the cost at initialization, $\Var_\rho(H_{jk})$ decays exponentially with the number of qubits $n$. For local averaged objectives in bounded-depth circuits, $\Var_\rho(H_{jk})$ admits polynomial bounds controlled by the growth of the backward lightcone on the interaction graph.
As a consequence, the number of measurement shots required to estimate $H_{jk}$ to fixed accuracy inherits the same exponential (global) or polynomial (local) scaling.
Extensive numerical experiments over system size, circuit depth, and interaction graphs validate the predicted variance scaling.
Overall, the paper quantifies when Hessian entries can be resolved at initialization under finite sampling, providing a mathematically grounded basis for second-order information in variational optimization.

\end{abstract}

\begin{keywords}
variational quantum algorithms, barren plateaus, Hessian variance, parameter-shift rule, cost function locality
\end{keywords}

\begin{MSCcodes}
81P68, 65C40, 65K10, 68Q12
\end{MSCcodes}

\section{Introduction}
\label{sec:introduction}

Variational quantum algorithms (VQAs) have emerged as a dominant paradigm for the noisy intermediate-scale quantum (NISQ) era, formulating problems as stochastic optimization over parameterized quantum circuits~\cite{an2023linear, cerezo2021variational, hu2024quantum, jin2024quantum, kandala2017hardware, Peruzzo2014, zhu2025quantum}. However, their scalability is fundamentally constrained by the geometry of the optimization landscape. Beyond the general NP-hardness of training~\cite{PhysRevLett.127.120502}, a central obstruction is the \emph{barren plateau} phenomenon~\cite{larocca2025barren, mcclean2018barren}: for broad classes of ansatz families, gradient variances decay exponentially with system size, rendering first-order optimization ineffective at random initialization~\cite{bharti2022noisy, Grant2019initialization}.

The mechanisms driving this first-order information loss are now well understood. Theoretical analyses link vanishing gradients to factors such as excessive entanglement~\cite{marrero2021entanglement}, the globality of the cost operator~\cite{cerezo2021cost}, and hardware noise~\cite{wang2021noise}. Geometrically, these landscapes are characterized by narrow gorges~\cite{arrasmith2022equivalence} or a proliferation of traps~\cite{PhysRevA.111.012441}. In response, a diverse array of diagnostic~\cite{larocca2022diagnosing} and mitigation strategies has been developed, encompassing architectural designs~\cite{PhysRevX.11.041011}, specialized initialization techniques~\cite{Grant2019initialization, meng2026trainability}, layer-wise training~\cite{Skolik2021}, and controls on ansatz expressibility~\cite{holmes2022expressibility}.

Despite this progress, rigorous quantitative results for Hessian structure and scaling at initialization remain limited. Second-order methods, such as Newton-type updates~\cite{meng2026trainability} and the quantum natural gradient~\cite{stokes2020quantum}, are standard tools in scientific computing, but in the quantum setting their usefulness is constrained by how accurately Hessian information can be estimated as the system size grows. While prior work indicates that higher-order derivatives can also vanish~\cite{cerezo2021higher} and empirical studies have explored specific applications~\cite{sen2022hessian}, existing theoretical analyses largely focus on global operator norms or specific backpropagation schemes~\cite{Bowles2025BackpropScalingPQC}. These approaches do not quantify the \emph{entry-wise statistical resolvability} of the Hessian at random initialization, which is essential in practice where Hessian estimators are formed from finite-shot measurements. Consequently, a fundamental question arises: when first-order gradients vanish, does second-order information remain statistically accessible, and what precise scaling laws govern its variance?

In this work, we bridge this gap by quantifying the initialization scaling of Hessian-entry variance at random initialization.
Focusing on hardware-efficient architectures with bounded-depth locality, our analysis rests on two ingredients:
(i) exact parameter-shift rules~\cite{bergholm2018pennylane, crooks2019gradients, wierichs2022general} for derivative evaluation, and
(ii) the role of cost-function locality in mitigating first-order barren plateaus~\cite{cerezo2021cost, uvarov2021locality}.
Using exact second-order parameter-shift identities, we express each Hessian entry as a constant-size linear combination of shifted objective evaluations.
This representation rewrites the entrywise variance $\Var_\rho(H_{jk})$ as a finite covariance--quadratic form and makes its scaling explicitly controllable through backward-lightcone geometry.

\begin{figure}[htbp]
\centering
\includegraphics[width=0.8\linewidth]{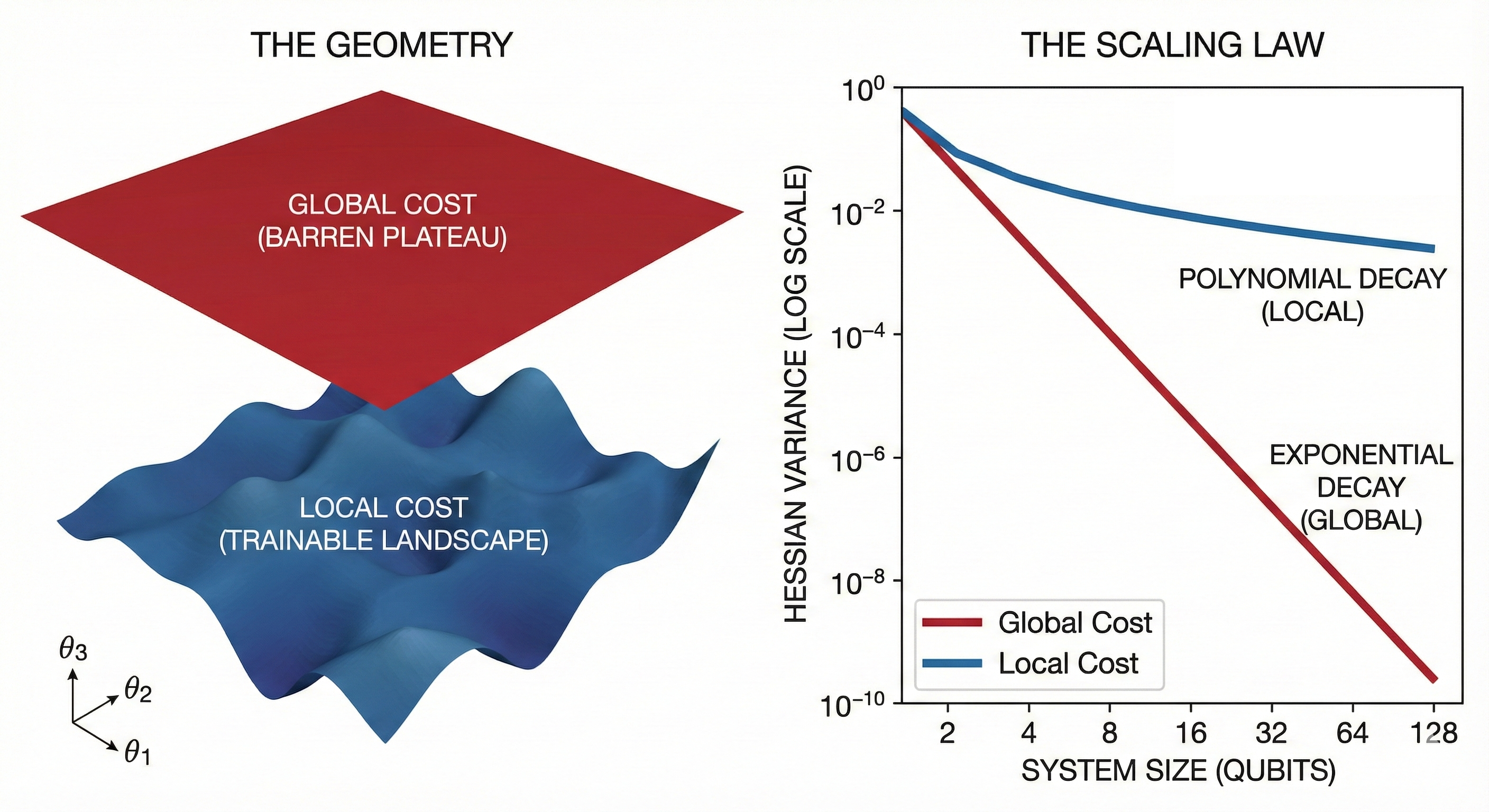}
\caption{\textbf{Higher-order barren plateaus: global vs.\ local objectives.}
Schematic landscapes and the expected scaling of Hessian-entry variance with system size: global costs exhibit exponential decay, whereas term-wise local costs exhibit polynomial decay.}
\label{fig:rel_improve}
\end{figure}

Our analysis yields two distinct scaling regimes at random initialization. Fig.~\ref{fig:rel_improve} schematically illustrates these regimes, which we refer to as \emph{higher-order barren plateaus}. For global objectives, we show that Hessian-entry variances inherit the exponential concentration of the objective, implying an exponentially growing finite-shot cost to resolve individual Hessian entries as the system size $n$ increases. In contrast, for term-wise $k$-local objectives (sums of terms acting on at most $k$ qubits) implemented by bounded-depth circuits, we obtain polynomial variance bounds controlled by the growth of the backward lightcone on the interaction graph. When the lightcone remains sub-extensive, these bounds yield a polynomial resolution cost for Hessian-entry estimation at initialization. We support the theory with numerical experiments that verify the predicted scaling laws, probe the Hessian eigenspectra at initialization, and quantify the corresponding sampling costs implied by our resolvability criterion.

The remainder of the paper is structured as follows. In \S~\ref{sec:background}, we establish the theoretical framework, including architectural assumptions and the relevant parameter-shift representations. \S~\ref{sec:analysis} presents our main theoretical analysis, deriving the explicit variance bounds and scaling laws that distinguish global from local objectives. These analytical predictions are demonstrated in \S~\ref{sec:experiments} through extensive numerical experiments. Finally, we discuss practical implications and future directions in \S~\ref{sec:conclusion}. Detailed proofs and supplementary technical results are provided in the appendices.

\section{Theoretical Background}
\label{sec:background}

We introduce the basic ingredients needed for the analysis of higher-order derivatives in VQAs. Specifically, the circuit model and objective functions, formalize locality notions that control operator spreading, and derive the parameter-shift identities for gradients and Hessian entries used throughout the paper. The initialization assumptions and the definition of higher-order barren plateaus are also stated here; these will be used to obtain the variance bounds in \S~\ref{sec:analysis}.

\subsection{Parameterized quantum circuits}
\label{sec:sub:pqc}

A parameterized quantum circuit is a family of unitaries $\{U(\bth)\}_{\bth\in\mathbb{R}^M}$ acting on the $n$-qubit Hilbert space $\mathcal{H}_n=(\mathbb{C}^2)^{\otimes n}$, where $M$ is the number of parameters. For each $\bth$, the circuit prepares
\begin{equation}
|\psi(\bth)\rangle = U(\bth)\,|0\rangle^{\otimes n}\in\mathcal{H}_n.
\end{equation}

We focus on hardware-efficient architectures of depth $L$ obtained by alternating fixed entangling layers and parameterized rotations. Specifically, each layer $\ell$ consists of a parameter-independent entangling unitary $V_\ell$ acting on $(\mathbb{C}^2)^{\otimes n}$, followed by $m_\ell$ parametrized gates, so that
\begin{equation}
\label{eq:pqc}
U(\bth)
=
\prod_{\ell=1}^{L}
\left(
V_{\ell}
\prod_{j=1}^{m_{\ell}}
\exp(-i\theta_{\ell,j}G_{\ell,j})
\right),
\end{equation}
with products ordered in increasing $\ell$ and $M=\sum_{\ell=1}^{L} m_{\ell}$. We assume that each parametrized gate in~\eqref{eq:pqc} carries its own independent parameter (i.e., no parameter sharing), so that the parameter vector can be written as ${\bth}=\{\theta_{\ell,j}\}_{\ell=1,\dots,L;\, j=1,\dots,m_\ell}$.

To enable exact constant-cost parameter-shift rules for gradients and Hessians, we assume each parametrized gate has a two-point-spectrum generator (e.g., Pauli rotations). We note that more general spectra admit generalized shift rules with a larger constant overhead.

\begin{assumption}[Two-point spectrum generators]
\label{ass:gen}
In \eqref{eq:pqc}, each generator $G_{\ell,j}$ is Hermitian and supported on at most $r$
qubits, where $r$ is independent of $n$. Moreover, $G_{\ell,j}$ has a two-point spectrum
$\mathrm{spec}(G_{\ell,j})=\{\pm \gamma\}$ for some $\gamma>0$. Without loss of generality we
take $\gamma=\tfrac12$, i.e., $G_{\ell,j}^2=\frac14 I$.
This includes Pauli rotations $G_{\ell,j}=P_{\ell,j}/2$ with $P_{\ell,j}$ a Pauli string
supported on at most $r$ qubits.
\end{assumption}

\subsection{Cost functions and locality}
\label{sec:sub:local}

The objective is to minimize the expectation value of a Hermitian observable $O$ under the variational state $|\psi(\bth)\rangle$:
\begin{equation}
\label{eq:cost}
C(\bth)=\langle \psi(\bth)|\,O\,|\psi(\bth)\rangle
       =\langle 0|\,U^{\dagger}(\bth)\,O\,U(\bth)\,|0\rangle.
\end{equation}
Locality of $O$ controls the spatial scale on which the conjugated observable $U^\dagger(\bth)OU(\bth)$ can spread, and thus the dependence structure underlying the variance bounds in \S~\ref{sec:analysis}.

\begin{definition}[Local versus global objectives]
\label{def:local_global}
An observable $O$ is \emph{$k$-local} if it acts non-trivially on at most $k$ qubits, i.e.,
$|\supp(O)|\le k$ for a constant $k$ independent of $n$. 
More generally, an objective $C(\bth)$ is \emph{$k$-local term-wise} if it can be written as
$C(\bth)=\frac{1}{n}\sum_{i=1}^n \langle\psi(\bth)|\,O_i\,|\psi(\bth)\rangle$ with each $O_i$
being $k$-local. 
We call an observable $O$ (and the objective it induces via \eqref{eq:cost}) \emph{global} if it
has extensive support, i.e., $|\supp(O)|$ grows linearly
with $n$.
\end{definition}

A canonical global example is
\begin{equation}
\label{eq:globalcost}
O_{\mathrm{glo}} = Z^{\otimes n},
\qquad
C_{\mathrm{glo}}(\bth)=\langle\psi(\bth)|Z^{\otimes n}|\psi(\bth)\rangle,
\end{equation}
which acts non-trivially on all qubits. In contrast, local objectives are typically constructed from bounded-support terms, e.g., $Z_i$ or $Z_i Z_{i+1}$. 
For normalization one often considers averaged local costs, for instance
\begin{equation}
\label{eq:localcost}
C_{\mathrm{loc}}(\bth)=\frac{1}{n}\sum_{i=1}^{n}\langle\psi(\bth)|Z_i|\psi(\bth)\rangle,
\end{equation}
which remains $1$-local term-wise while producing an $O(1)$-scaled objective.

\subsection{Parameter-shift identities}
\label{sec:sub:param_shift}

Under Assumption~\ref{ass:gen}, derivatives of $C(\bth)$ admit exact two-shift parameter-shift formulas~\cite{crooks2019gradients}. In particular,
\begin{equation}
\label{eq:psr1}
\frac{\partial C}{\partial \theta_j}
=
\frac{1}{2}\Big[
C(\bth + \tfrac{\pi}{2} \ej)
- C(\bth - \tfrac{\pi}{2} \ej)
\Big],
\end{equation}
where $\ej$ denotes the $j$-th canonical basis vector in $\mathbb{R}^M$. Applying \eqref{eq:psr1} twice yields the Hessian entries. For $j\neq k$,
\begin{align}
\label{eq:psr2-off}
\Hjk = \frac{1}{4}\Big[
C(\bth + \tfrac{\pi}{2}\ej &+ \tfrac{\pi}{2}\ek)
- C(\bth + \tfrac{\pi}{2}\ej - \tfrac{\pi}{2}\ek)
\nonumber \\
&- C(\bth - \tfrac{\pi}{2}\ej + \tfrac{\pi}{2}\ek)
+ C(\bth - \tfrac{\pi}{2}\ej - \tfrac{\pi}{2}\ek)
\Big],
\end{align}
and for $j=k$,
\begin{equation}
\label{eq:psr2-diag}
\Hjj = \frac{1}{4}\Big[
C(\bth + \pi \ej)
- 2 C(\bth)
+ C(\bth - \pi \ej)
\Big].
\end{equation}
For more general generator spectra, generalized parameter-shift rules still yield finite linear combinations of shifted costs~\cite{wierichs2022general}. We note that our results only require an $O(1)$ number of shifts per derivative.

\subsection{Trainability metrics at random initialization}
\label{sec:sub:assumptions}

We study the initialization-time variance of gradient and Hessian entries. Unless otherwise stated, all expectations and variances are taken with respect to ${\bth}\sim\rho$, where $\rho$ is the product measure of i.i.d.\ uniform random variables on $[0,2\pi)$.

\begin{assumption}[Depth regime]
\label{ass:depth}
The circuit depth $L$ is either fixed or grows sublinearly with $n$; in particular, $rL=o(n)$ as $n\to\infty$, where $r$ is the gate locality from Assumption~\ref{ass:gen} (each generator acts on at most $r$ qubits).
\end{assumption}

Under Assumptions~\ref{ass:gen} and~\ref{ass:depth}, and for a $k$-local observable $O$ in Definition~\ref{def:local_global}, conjugation by $U(\bth)$ expands the support of $O$ only within a backward lightcone of radius $O(rL)$. We will quantify the corresponding neighborhood growth via the graph growth function $V_G(\cdot)$ introduced in \S~\ref{subsec:main_result}. 

\begin{assumption}[Observable normalization]
\label{ass:obs}
The observable $O$ in \eqref{eq:cost} is Hermitian and satisfies $\|O\|\le 1$. For term-wise local objectives $C(\bth)=\frac{1}{n}\sum_{i=1}^n \langle\psi(\bth)|O_i|\\\psi(\bth)\rangle$, we also assume $\|O_i\|\le 1$.
\end{assumption}

This normalization ensures that asymptotic variance rates reflect circuit geometry rather than trivial rescaling. We therefore quantify trainability at initialization through variance-based resolvability metrics as follows.

\subsubsection{Variance of Hessian entries}
For any indices $(j,k)$, we quantify the initialization-time variability of a Hessian entry by
\begin{equation}
\label{eq:varH}
\Var_\rho(H_{jk})
= \E_\rho[H_{jk}^2] - \big(\E_\rho[H_{jk}]\big)^2.
\end{equation}
We interpret $\Var_\rho(H_{jk})$ as a proxy for the \emph{statistical resolvability} of the entry $H_{jk}$ under finite-shot estimation: smaller variance indicates that substantially more samples are required to distinguish the entry from sampling noise at a prescribed accuracy.

\begin{definition}[Higher-order barren plateau]
\label{def:HOBP}
Let $C(\bth)$ be the cost \eqref{eq:cost} with Hessian $H(\bth)$. We say the circuit exhibits a \emph{higher-order barren plateau at initialization (entry-wise)} if there exist constants $c,\alpha>0$ independent of $n$ such that
\begin{equation}
\label{eq:exp_decay}
\Var_\rho(H_{jk}) \le c\, e^{-\alpha n}
\quad \text{for all } j,k \text{ and all sufficiently large } n.
\end{equation}
\end{definition}

Definition~\ref{def:HOBP} is entry-wise and does not by itself imply spectral conditioning (e.g.,
bounds on extremal eigenvalues or condition numbers); small typical entries can still coexist with a
few directions of non-negligible curvature. Nevertheless, entry-wise second-moment control is the
natural notion for finite-sample shift-based estimation and yields, for instance, $\E_\rho[\|H\|_2^2]\le \E_\rho[\|H\|_F^2]=\sum_{j,k}\E_\rho[H_{jk}^2]$ (cf.~Lemma~\ref{lem:entrywise-to-norms}).

The following lemma reduces Hessian-entry variance to a covariance--quadratic form over finitely many shifted objective evaluations, which is the starting point for our global/ local scaling analysis. The proof is given in Appendix~\ref{app:proof_var_rep}.

\begin{lemma}[Parameter-shift covariance representation]
\label{lem:var_rep}
Under Assumption~\ref{ass:gen}, the diagonal Hessian entry admits the second-order parameter-shift representation~\eqref{eq:psr2-diag}.
Define $C_{+}:=C(\bth+\pi\ej)$, $C_{0}:=C(\bth)$, and $C_{-}:=C(\bth-\pi\ej)$, and let
$\Sigma\in\mathbb{R}^{3\times 3}$ be the covariance matrix with entries $\Sigma_{\alpha\beta}:=\Cov_\rho(C_\alpha,C_\beta)$ for $\alpha,\beta\in\{+,0,-\}$.
Then
\begin{equation}
\label{eq:var_Hjj_quad}
\Var_\rho(H_{jj}) = w^\top \Sigma w,
\qquad
w:=\tfrac14(1,-2,1)^\top,
\end{equation}
and equivalently $\Var_\rho(H_{jj})$ can be written as an explicit linear combination of variances and covariances among $\{C_+,C_0,C_-\}$. 
An analogous covariance--quadratic representation holds for off-diagonal entries $H_{jk}$ using \eqref{eq:psr2-off}.
More generally, whenever a derivative quantity admits a finite-shift representation
$X(\bth)=\sum_{\ell=1}^m w_\ell\, C(\bth\oplus s_\ell)$ with $m<\infty$, one has
$\Var_\rho(X)=\sum_{\ell,\ell'=1}^m w_\ell w_{\ell'}\Cov_\rho\!\big(C(\bth\oplus s_\ell),C(\bth\oplus s_{\ell'})\big)$.
\end{lemma}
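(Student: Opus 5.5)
The plan is to first establish the exact parameter-shift identity \eqref{eq:psr2-diag}, and then treat the variance statement as the bilinearity of covariance applied to a fixed linear combination of random variables.

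\textbf{Step 1: the shift identity.} Fix $j$ and freeze all other parameters. Under Assumption~\ref{ass:gen}, $G_j^2=\tfrac14 I$, so
\[
e^{-i\theta G_j}=\cos(\theta/2)\,I-2i\sin(\theta/2)\,G_j .
\]
Substituting this into \eqref{eq:cost} shows that $f(\theta_j):=C(\bth)$ is a trigonometric polynomial of the form
\[
f(\theta)=a+b\cos\theta+c\sin\theta,
\]
with $a,b,c$ depending only on the frozen parameters. The first-order rule \eqref{eq:psr1} follows immediately. For the second derivative we have $f''(\theta)=-(b\cos\theta+c\sin\theta)$. On the other hand, since $f(\theta\pm\pi)=a-b\cos\theta-c\sin\theta$,
\[
\tfrac14\big[f(\theta+\pi)-2f(\theta)+f(\theta-\pi)\big]
=\tfrac14\big[2a-2(b\cos\theta+c\sin\theta)-2a-2(b\cos\theta+c\sin\theta)\big]
=-(b\cos\theta+c\sin\theta).
\]
This is exactly $f''(\theta)$, which gives \eqref{eq:psr2-diag}. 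For \eqref{eq:psr2-off}, apply \eqref{eq:psr1} in $\theta_k$ to each of the two $\theta_j$-shifted terms of \eqref{eq:psr1}. This is legitimate because $C$ is smooth, so the partial derivatives commute.

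\textbf{Step 2: integrability.} All the $C_\alpha$ are bounded random variables, since $|C|\le\|O\|\le 1$ by Assumption~\ref{ass:obs}. Hence all variances and covariances are finite, and the shifted points lie in the parameter space with the angles read modulo $2\pi$.

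\textbf{Step 3: the quadratic form.} Write $H_{jj}=w^\top \mathbf C$, where $\mathbf C=(C_+,C_0,C_-)^\top$ is a random vector in $\R^3$ and $w$ is deterministic. Linearity of expectation gives $\E_\rho[H_{jj}]=w^\top\E_\rho[\mathbf C]$. Therefore
\[
\Var_\rho(H_{jj})=\E_\rho\big[(w^\top(\mathbf C-\E\mathbf C))^2\big]
=\sum_{\alpha,\beta}w_\alpha w_\beta\,\E\big[(C_\alpha-\E C_\alpha)(C_\beta-\E C_\beta)\big]
=w^\top\Sigma w .
\]
Expanding this with $w=\tfrac14(1,-2,1)$ gives the explicit combination
\[
\tfrac1{16}\Big[\Var C_++4\Var C_0+\Var C_-+2\Cov(C_+,C_-)-4\Cov(C_+,C_0)-4\Cov(C_0,C_-)\Big].
\]

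\textbf{Step 4: the general and off-diagonal cases.} The same computation applies verbatim to any finite sum $X=\sum_\ell w_\ell C(\bth\oplus s_\ell)$. The off-diagonal case is the instance $m=4$ with weights $\tfrac14(1,-1,-1,1)$.

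One can optionally note that, because $\rho$ is uniform on the torus and hence shift-invariant, each $C_\alpha$ has the same law as $C$. This simplifies the diagonal terms of $\Sigma$ but is not needed for the identity.

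\textbf{Main obstacle.} The only nonroutine step is the exactness of the second-order shift identity (Step 1), which relies on the two-point spectrum. The rest is bilinearity of covariance.
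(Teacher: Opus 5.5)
Your proposal is correct and follows essentially the same route as the paper: you use boundedness $|C|\le 1$ for integrability and bilinearity of covariance applied to the fixed weight vector, giving the same explicit expansion, and you handle the off-diagonal and general finite-shift cases verbatim. The only difference is that you also derive the shift identities \eqref{eq:psr1}--\eqref{eq:psr2-diag} from $G_j^2=\tfrac14 I$, which the paper simply takes from the literature; your derivation implicitly uses the paper's no-parameter-sharing convention, which is what makes $\theta_j\mapsto C$ a degree-one trigonometric polynomial.
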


Lemma~\ref{lem:var_rep} extends to any differentiation scheme that represents $H_{jk}$ as a finite linear combination of shifted costs, provided the number of shifted evaluations is $O(1)$, with constants determined by the shift coefficients.

\section{Asymptotic Variance Analysis}
\label{sec:analysis}

This section derives asymptotic bounds for the entry-wise Hessian variance at random initialization. 
We first express $\Var_\rho(H_{jk})$ as a finite covariance--quadratic form over a constant number of shifted cost evaluations.
We then bound these covariances using locality and backward-lightcone structure for local objectives, and using a cost-concentration condition for global objectives.

\subsection{Variance representation}
\label{subsec:var_rep}

Let $C(\bth)$ be the cost~\eqref{eq:cost}. The second-order parameter-shift rules~\eqref{eq:psr2-off}--\eqref{eq:psr2-diag} express each Hessian entry as a finite linear combination of shifted costs, and hence yield the covariance--quadratic form 
\begin{equation}
\label{eq:var_quadratic_form_intro}
\Var_\rho(H_{jk})
= \sum_{s,s'\in\mathcal S_{jk}} w_s w_{s'}\Cov_\rho\big(C(\bth \oplus s),C(\bth \oplus s')\big),
\end{equation}
where $\oplus$ denotes component-wise addition modulo $2\pi$.
For the two-shift rules \eqref{eq:psr2-off}--\eqref{eq:psr2-diag}, one has $|\mathcal S_{jk}|\le 4$ and $w_s=O(1)$ independent of $n$. Derived from Lemma~\ref{lem:var_rep}, the explicit diagonal and off-diagonal shift rules are listed in Appendix~\ref{app:cov_quad_ps}; \eqref{eq:var_quadratic_form_intro} is the starting point for the variance bounds in \S~\ref{subsec:main_result}.

While \eqref{eq:var_quadratic_form_intro} characterizes the initialization-induced variability of $H_{jk}$, practical estimation relies on finite-shot evaluations of the shifted costs. It is therefore natural to compare the shot-noise contribution to the estimator variance with the intrinsic scale $\Var_\rho(H_{jk})$ and to quantify the number of shots needed to separate the two. This motivates the following resolution criterion.

\begin{definition}[Resolution criterion and shot requirement]
\label{def:resolution}
Let $\widehat H_{jk}$ be a finite-shot estimator of $H_{jk}$ constructed from shifted cost evaluations. 
For $\eta\in(0,1]$, we say that $H_{jk}$ is \emph{resolved at initialization} if
\[
\E_{\bth\sim\rho}\left[\Var_{\mathrm{sh}}(\widehat H_{jk}\mid\bth)\right]\le \eta\Var_\rho(H_{jk}).
\]
The corresponding \emph{resolution shot requirement} $N_{\mathrm{res}}$ is the smallest number of shots per shifted cost evaluation for which the above condition holds.
\end{definition}

Algorithm~\ref{alg:psr_hessian} states the finite-shot parameter-shift estimator used throughout and in the experiments of \S~\ref{sec:experiments}.

\begin{algorithm}[t]
\caption{Finite-shot estimation of a Hessian entry via parameter-shift}
\label{alg:psr_hessian}
\begin{center}
\begin{minipage}{0.9\linewidth}
\begin{algorithmic}[1]
\Require Ansatz $U(\bth)$, objective $C(\bth)$, indices $(j,k)$, measurement shots $N$
\Ensure Unbiased estimate $\widehat H_{jk}$ of the Hessian entry $H_{jk}$

\State $\widehat H_{jk} \gets 0$
\ForAll{$s \in \mathcal S_{jk}$} \Comment{Loop over shift set $\mathcal S_{jk}$ with coeff. $w_s$ (cf. \S~\ref{sec:sub:param_shift})}
    \State Prepare $|\psi(\bth \oplus s)\rangle$ and estimate $C(\bth\oplus s)=\langle O\rangle$ using $N$ shots
    \State $\widehat C(\bth \oplus s) \gets$ sample mean of the measurement outcomes
    \State $\widehat H_{jk} \gets \widehat H_{jk} + w_s\widehat C(\bth \oplus s)$ 
\EndFor
\State \Return $\widehat H_{jk}$
\end{algorithmic}
\end{minipage}
\end{center}
\end{algorithm}

\subsection{Main results}
\label{subsec:main_result}

We now derive explicit scaling bounds from~\eqref{eq:var_quadratic_form_intro} by bounding covariances between shifted costs. We treat global objectives first, where second-moment concentration transfers through the constant-size finite-shift rules, and then turn to term-wise $k$-local averaged objectives.

\begin{assumption}[Second-moment concentration for global objectives]
\label{ass:GlobalDesign}
Let $O$ be Hermitian with $\|O\|\le 1$ and define the cost $C(\bth)$ by \eqref{eq:cost}.
Assume there exist constants $c_{\rm g}(r,L)>0$ and $\alpha(r,L)>0$, independent of $n$, such that
\begin{equation}
\label{eq:global_concentration}
\Var_\rho\big[C(\bth)\big]\le c_{\rm g}(r,L)e^{-\alpha(r,L)n}.
\end{equation}
\end{assumption}

\begin{remark}
Assumption~\ref{ass:GlobalDesign} is a second-moment concentration condition on the objective value at random initialization. Its role is to serve as an input for the global regime; Lemma~\ref{lem:transference} shows that such concentration is uniform over the finite shift sets used in parameter-shift rules and is preserved under any constant-size finite-shift linear combination. 

The assumption holds, for example, when the initialized state ensemble forms an (approximate) projective $2$-design, in which case $\Var_\rho[C(\bth)]$ decays as $\mathcal{O}(2^{-n})$ for $\|O\|\le 1$ (see Appendix~\ref{app:design_route} and Proposition~\ref{prop:2design_implies_global}). In architectures where design guarantees are unavailable at the considered depth, \eqref{eq:global_concentration} can be assessed directly by Monte Carlo estimation of $\Var_\rho[C(\bth)]$; by shift-invariance of $\rho$, the same estimate applies to all shifted evaluations appearing in the parameter-shift rules.
\end{remark}

The next lemma formalizes this transference: second-moment concentration of $C(\bth)$ is uniform over shifts and stable under finite-shift linear combinations.

\begin{lemma}[Uniform shift concentration and finite-shift transference]
\label{lem:transference}
Let $\rho$ be shift-invariant on the torus, i.e., $\bth\sim\rho \Rightarrow \bth\oplus s\sim\rho$ for all shifts $s\in\mathbb{R}^M$.
Fix a finite shift set $\mathcal S\subset\mathbb{R}^M$ and coefficients $\{w_s\}_{s\in\mathcal S}\subset\mathbb{R}$, and define
$X_s := C(\bth\oplus s)$. If $\Var_{\rho}[C(\bth)]\le v_n$, then the same bound holds uniformly over shifts,
\[
\Var_{\rho}[X_s]=\Var_{\rho}[C(\bth\oplus s)] = \Var_{\rho}[C(\bth)] \le v_n
\qquad \text{for all } s\in \mathcal S,
\]
and moreover,
\[
\Var_{\rho}\Big(\sum_{s\in \mathcal S} w_s\,C(\bth\oplus s)\Big)
\le
\Big(\sum_{s\in \mathcal S}|w_s|\Big)^2 v_n .
\]
\end{lemma}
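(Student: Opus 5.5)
The plan is to prove the two claims of Lemma~\ref{lem:transference} in order. The first follows from shift-invariance of $\rho$ alone. The second combines this with a Cauchy--Schwarz bound on each covariance.

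\textbf{Step 1 (uniform shift bound).} Fix $s\in\mathcal S$. By hypothesis, $\bth\oplus s$ has the same law as $\bth$ under $\rho$. Hence $X_s=C(\bth\oplus s)$ and $C(\bth)$ are equal in distribution. This requires $C$ to be well defined on the torus; it is, since $C$ is $2\pi$-periodic in each coordinate because each gate $\exp(-i\theta G)$ with $G^2=\tfrac14 I$ is $4\pi$-periodic up to a global sign that cancels in $U^\dagger O U$. Equal laws give equal moments: $\E_\rho[X_s]=\E_\rho[C]$ and $\E_\rho[X_s^2]=\E_\rho[C^2]$. Both moments are finite because $|C|\le\|O\|\le1$. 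Therefore $\Var_\rho[X_s]=\Var_\rho[C(\bth)]\le v_n$ for every $s\in\mathcal S$.

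\textbf{Step 2 (finite-shift combination).} Expand the variance of the linear combination as in Lemma~\ref{lem:var_rep}:
\[
\Var_\rho\Big(\sum_{s} w_s X_s\Big)=\sum_{s,s'\in\mathcal S} w_s w_{s'}\Cov_\rho(X_s,X_{s'}).
\]
Bound each term by Cauchy--Schwarz, $|\Cov_\rho(X_s,X_{s'})|\le \sqrt{\Var_\rho X_s\,\Var_\rho X_{s'}}\le v_n$, using Step 1. Summing absolute values then gives at most $\sum_{s,s'}|w_s||w_{s'}|\,v_n=(\sum_s|w_s|)^2 v_n$. An equivalent route is Minkowski's inequality in $L^2(\rho)$ applied to the centered variables $X_s-\E X_s$: this gives $\mathrm{sd}(\sum_s w_sX_s)\le\sum_s|w_s|\,\mathrm{sd}(X_s)\le \sum_s|w_s|\sqrt{v_n}$, and squaring yields the same bound.

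\textbf{Main obstacle.} There is no real obstacle here. The only point needing care is Step 1, namely checking that the shifted parameter is again $\rho$-distributed. For the product of uniform measures on $[0,2\pi)$ this holds because translation modulo $2\pi$ preserves Lebesgue measure on each circle factor. In addition, the shift sets $\pm\tfrac{\pi}{2}e_j$ and $\pm\pi e_j$ used in \eqref{eq:psr2-off}--\eqref{eq:psr2-diag} are compatible with the $2\pi$-periodicity of $C$, so reducing modulo $2\pi$ does not change $C(\bth+s)$.
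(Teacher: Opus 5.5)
Your proof is correct and follows the paper's argument. Equality in law of $\bth\oplus s$ and $\bth$ gives the uniform variance bound, and the bilinear covariance expansion with Cauchy--Schwarz, $|\Cov_\rho(X_s,X_{s'})|\le v_n$, gives $(\sum_s|w_s|)^2v_n$; your Minkowski variant is an equivalent repackaging. One wording fix in your well-definedness remark: since $\exp(-i\theta G)=\cos(\theta/2)I-2i\sin(\theta/2)G$, each gate is exactly $4\pi$-periodic and $2\pi$-periodic only up to the global sign $-1$. That sign is what cancels in $U^\dagger O U$, so your conclusion that $C$ is $2\pi$-periodic is right.
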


\begin{proof}
Shift-invariance implies $\bth$ and $\bth\oplus s$ have the same distribution under $\rho$, hence
$\Var_{\rho}[C(\bth\oplus s)] = \Var_{\rho}[C(\bth)]$ for all $s\in \mathcal S$. For the linear combination, by bilinearity,
\[
\Var\Big(\sum_{s\in \mathcal S} w_s X_s\Big)=\sum_{s,s'\in \mathcal S} w_s w_{s'}\Cov(X_s,X_{s'}).
\]
By Cauchy--Schwarz,
$|\Cov(X_s,X_{s'})|\le \sqrt{\Var(X_s)\Var(X_{s'})}\le v_n$,
where the last inequality uses the uniform variance bound from the first part.
Therefore,
\[
\Var\Big(\sum_{s\in \mathcal S} w_s X_s\Big)
\le v_n\sum_{s,s'\in \mathcal S}|w_s||w_{s'}|
= \Big(\sum_{s\in \mathcal S}|w_s|\Big)^2 v_n.
\]
\end{proof}

To make the locality bound explicit, we quantify how many local terms can have overlapping
backward lightcones, which controls the maximum degree of the associated dependency graph.
Let $G$ be the interaction graph of the architecture and denote by $B_G(v,m)$ the ball of radius $m$
centered at $v$. Define the growth function
\begin{equation}
\label{eq:graph_growth}
V_G(m):=\max_{v\in V(G)} \big|B_G(v,m)\big|.
\end{equation}
For bounded-degree graphs, $V_G(m)$ is finite for each fixed $m$; for $D$-dimensional lattices one has
$V_G(m)=\mathcal{O}(m^D)$. We are then ready to state our main theorem.

\begin{theorem}[Asymptotic scaling of Hessian-entry variance]
\label{thm:scaling-explicit}
Under Assumptions~\ref{ass:gen}, \ref{ass:depth}, and \ref{ass:obs}, the Hessian-entry variances satisfy, for all $j,k\in\{1,\ldots,M\}$:
\begin{enumerate}[(i)]
\item Global objectives. If $|\supp(O)|$ grows linearly with $n$ and Assumption~\ref{ass:GlobalDesign} holds, then there exist constants $\tilde c(r,L)>0$ and $\tilde\alpha(r,L)>0$ such that
\begin{equation}
\label{eq:global_var_bound}
\Var_\rho(H_{jk}) \le \tilde c(r,L)\exp\big(-\tilde\alpha(r,L) n\big).
\end{equation}

\item Local objectives. Suppose the cost is of the form
\[
C_{\rm loc}(\bth)=\frac{1}{n}\sum_{v=1}^{n}\langle\psi(\bth)|O_v|\psi(\bth)\rangle,
\qquad \|O_v\|\le 1,
\]
where each $O_v$ is supported on at most $k$ qubits with $k$ independent of $n$. Then there exists $c_{\rm loc}(k,r,L)>0$ such that
\begin{equation}
\label{eq:local_bound}
\Var_\rho(H_{jk}) \le c_{\rm loc}(k,r,L)\frac{V_G(k+2rL)}{n}.
\end{equation}
In particular, if $V_G(m)=\mathcal{O}(m^D)$ for some fixed $D$, then
\begin{equation}
\label{eq:local_bound_poly_growth}
\Var_\rho(H_{jk}) \le c_{\rm loc}(k,r,L)\frac{(k+2rL)^D}{n}.
\end{equation}
\end{enumerate}
\end{theorem}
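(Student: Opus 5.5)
Both parts go through the covariance--quadratic form \eqref{eq:var_quadratic_form_intro} of Lemma~\ref{lem:var_rep}. For either shift rule \eqref{eq:psr2-off}--\eqref{eq:psr2-diag} we have $|\mathcal S_{jk}|\le 4$ and $\sum_s|w_s|\le 1$: the off-diagonal weights are $\pm\tfrac14$, and the diagonal weights $\tfrac14(1,-2,1)$ sum in absolute value to exactly $1$.

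\emph{Part (i).} This follows directly from Lemma~\ref{lem:transference}. Because $\rho$ is a product of uniform measures on $[0,2\pi)$, it is shift-invariant on the torus. Assumption~\ref{ass:GlobalDesign} gives $v_n=c_{\rm g}e^{-\alpha n}$. The transference bound then yields $\Var_\rho(H_{jk})\le(\sum_s|w_s|)^2 v_n\le c_{\rm g}e^{-\alpha n}$, so we may take $\tilde c=c_{\rm g}$ and $\tilde\alpha=\alpha$. The linear-support hypothesis is not needed for the inequality itself; it only makes the concentration assumption natural.

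\emph{Part (ii).} By the same Cauchy--Schwarz step as in Lemma~\ref{lem:transference}, it suffices to show $\Var_\rho[C_{\rm loc}(\bth\oplus s)]\le c\,V_G(k+2rL)/n$ uniformly in $s$. Shift-invariance reduces this to bounding $\Var_\rho[C_{\rm loc}(\bth)]$ itself, which costs at most a factor $(\sum|w_s|)^2\le 1$. Write $C_v(\bth)=\langle 0|U^\dagger O_v U|0\rangle$. Expanding gives
\[
\Var_\rho(C_{\rm loc})=\frac1{n^2}\sum_{v,v'}\Cov_\rho(C_v,C_{v'}).
\]
\textbf{Step 1 (lightcone locality).} In the Heisenberg picture, $U^\dagger O_vU$ is supported in the backward lightcone $\mathcal L_v$. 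Conjugating through $L$ layers, each made of gates of locality $r$ and of entanglers $V_\ell$ that we take to be geometrically local on $G$, enlarges the support by graph distance $O(r)$ per layer. Hence $\mathcal L_v\subseteq B_G(v,k+rL)$ up to an absorbed constant. Consequently $C_v$ depends only on the parameters $\bth_{\mathcal L_v}$ of gates inside $\mathcal L_v$, because the initial state is a product state and gates outside the cone cancel.

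\textbf{Step 2 (independence).} If $\mathcal L_v\cap\mathcal L_{v'}=\emptyset$, then $C_v$ and $C_{v'}$ are functions of disjoint blocks of independent coordinates of $\bth$, so $\Cov_\rho(C_v,C_{v'})=0$. Two cones can intersect only if $\dist_G(v,v')\le 2(k+rL)$, which is within $k+2rL$ after adjusting constants. For fixed $v$, the number of such $v'$ is therefore at most $V_G(k+2rL)$.

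\textbf{Step 3 (counting).} Every covariance satisfies $|\Cov|\le 1$, since $|C_v|\le\|O_v\|\le 1$. This gives
\[
\Var_\rho(C_{\rm loc})\le\frac{1}{n^2}\cdot n\cdot V_G(k+2rL)=\frac{V_G(k+2rL)}{n}.
\]
Combining with the covariance form yields \eqref{eq:local_bound}, and substituting $V_G(m)=\mathcal O(m^D)$ gives \eqref{eq:local_bound_poly_growth}.

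The main obstacle is Step 1: matching the lightcone radius to the exact argument $k+2rL$ of $V_G$. This requires stating precisely how $V_\ell$ and the $r$-local generators act on $G$, and how the $k$ qubits of $\supp(O_v)$ are anchored at a vertex $v$. I would first fix the convention that each layer grows supports by at most $r$ in graph distance. I would then absorb any mismatch in the radius into $c_{\rm loc}(k,r,L)$, using that $V_G$ at a constant multiple of a radius is controlled for bounded-degree or polynomial-growth graphs. I would also note that Assumption~\ref{ass:depth} keeps the cones sub-extensive, so the bound is nontrivial.
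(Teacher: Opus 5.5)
Your proposal is correct and follows essentially the paper's proof: the covariance--quadratic form from the shift rules, shift-invariance plus Cauchy--Schwarz (Lemma~\ref{lem:transference}) for part (i), and for part (ii) backward-lightcone independence under the product measure, together with counting at most $n\,V_G(\cdot)$ nonvanishing covariances of size at most $1$, which is exactly what Lemma~\ref{lem:depgraph_var} encodes. Routing the cross-shift covariances through Lemma~\ref{lem:transference} with $v_n=V_G(k+2rL)/n$ is slightly cleaner than the paper's repetition of the dependency-graph argument, and the radius mismatch you flag is only additive, so $V_G(m+k)\le V_G(k)\,V_G(m)$ absorbs it into $c_{\rm loc}$; the paper also uses, without stating them, the geometric locality of $V_\ell$ and the anchoring of $\supp(O_v)$ near $v$ that you make explicit.
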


\begin{proof}[Proof sketch]
By the parameter-shift representation, each Hessian entry $H_{jk}$ is a fixed $O(1)$-term linear combination of shifted costs $C(\bth\oplus s)$, hence $\Var_\rho(H_{jk})$ reduces to a finite covariance--quadratic form over a constant-size shift set.

For global objectives, Lemma~\ref{lem:transference} transfers the assumed second-moment concentration of $C(\bth)$ uniformly over the finite shifts; since the parameter-shift rule uses only $O(1)$ evaluations with $O(1)$ coefficients, the exponential rate in $n$ is preserved up to a constant factor, yielding~\eqref{eq:global_var_bound}.

For local objectives, bounded lightcone growth implies that sufficiently separated local contributions depend on disjoint subsets of parameters; under the product initialization measure $\rho$, their covariances vanish. A dependency-graph argument then gives $\Var_\rho(C(\bth\oplus s))=\mathcal{O}(V_G(k+2rL)/n)$ uniformly in $s$. Plugging this bound into the covariance representation yields~\eqref{eq:local_bound}. Details are deferred to Appendix~\ref{app:depgraph}.
\end{proof}

\begin{remark}
\label{rem:depth-scaling-threshold}
The local bound in Theorem~\ref{thm:scaling-explicit} depends on depth only through the lightcone growth factor
$V_G(k+2rL)$. Hence a sufficient condition for decay with system size is that the relevant backward lightcone remains
\emph{subextensive}, $V_G(k+2rL)=o(n)$. For bounded-degree $D$-dimensional lattices, $V_G(s)=O(s^D)$ up to the graph diameter and then reaches system size.
Therefore the bound remains vanishing provided $k+2rL=o(n^{1/D})$ (equivalently, $L=o(n^{1/D})$ up to constants),
whereas once $k+2rL$ reaches the linear system size the lightcone becomes extensive and the locality-based suppression
may no longer decay with $n$. On small-diameter graphs (e.g., complete graphs), $V_G$ can saturate rapidly, so this
locality protection can disappear at very shallow depth. For example, in 1D chains this permits $L=o(n)$, while in 2D grids it requires $L=o(\sqrt{n})$.
\end{remark}

\begin{remark}
The estimate \eqref{eq:local_bound} is an upper bound obtained via a generic dependency-graph argument and is not expected to be exponent-tight in typical instances.
The main message of Theorem~\ref{thm:scaling-explicit} is that the measurement cost of resolving Hessian entries is controlled by lightcone growth: exponential decay can occur for global objectives, while bounded-depth locality yields polynomial bounds.
\end{remark}

\subsection{From entry-wise variance to matrix norms and sample complexity}
\label{sec:resolvability-vs-conditioning}

While the entry-wise bounds above quantify the typical fluctuation scale, the practical utility of curvature information depends on the number of shots required to separate these fluctuations from measurement noise, and on how they aggregate into matrix-level quantities.
Throughout this subsection, $\E_\rho[\cdot]$ and $\Var_\rho(\cdot)$ denote expectation and variance with respect to initialization $\bth\sim\rho$. When finite-shot estimation is involved, we write $\E_{\mathrm{sh}}[\cdot\mid\bth]$ and $\Var_{\mathrm{sh}}(\cdot\mid\bth)$ for expectation and variance over measurement noise (shots), conditional on a fixed parameter value $\bth$.

We first record a standard implication of entry-wise second-moment control for matrix norms; a proof is included in Appendix~\ref{app:norm_bounds} for completeness.

\begin{lemma}[Entry-wise second moments imply Frobenius and spectral norm bounds]
\label{lem:entrywise-to-norms}
Let $H\in\mathbb{R}^{M\times M}$ be a random matrix with finite second moments. Then
\[
\E\|H\|_F^2=\sum_{j,k=1}^M \E[H_{jk}^2],
\qquad
\E\|H\|_2^2 \le \E\|H\|_F^2,
\qquad
\E\|H\|_2 \le \sqrt{\E\|H\|_F^2}.
\]
In particular, if $\Var(H_{jk})\le v_n$ and $\big|\E [H_{jk}]\big|\le \mu_n$ for all $j,k$, then
\[
\E\|H\|_F^2 \le M^2(v_n+\mu_n^2),
\qquad
\E\|H\|_2 \le M\sqrt{v_n+\mu_n^2}.
\]
\end{lemma}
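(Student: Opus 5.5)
The argument is elementary and proceeds entry by entry, using linearity of expectation and standard matrix-norm comparisons.

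\textbf{Frobenius identity.} Since $\|H\|_F^2=\sum_{j,k}H_{jk}^2$ is a finite sum of nonnegative random variables with finite means, linearity of expectation gives $\E\|H\|_F^2=\sum_{j,k}\E[H_{jk}^2]$ at once.

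\textbf{Spectral bounds.} Next we use the deterministic inequality $\|A\|_2\le\|A\|_F$ for every $A\in\mathbb{R}^{M\times M}$. It follows from $\|A\|_2^2=\sigma_{\max}(A)^2\le\sum_i\sigma_i(A)^2=\Tr(A^\top A)=\|A\|_F^2$. Applying it pointwise to $H$ and taking expectations gives $\E\|H\|_2^2\le\E\|H\|_F^2$. Jensen's inequality, in the form $(\E X)^2\le\E X^2$ with $X=\|H\|_2$ (equivalently, concavity of the square root), then yields $\E\|H\|_2\le\sqrt{\E\|H\|_2^2}\le\sqrt{\E\|H\|_F^2}$.

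\textbf{Quantitative consequence.} For each entry we decompose $\E[H_{jk}^2]=\Var(H_{jk})+(\E[H_{jk}])^2\le v_n+\mu_n^2$. Summing over the $M^2$ index pairs gives $\E\|H\|_F^2\le M^2(v_n+\mu_n^2)$. Substituting this into the previous step gives $\E\|H\|_2\le M\sqrt{v_n+\mu_n^2}$.

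No step presents a real obstacle. The only point needing care is that the finite-second-moment hypothesis guarantees every expectation above is finite, so that linearity and Jensen's inequality apply legitimately.
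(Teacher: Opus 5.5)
Your proposal is correct and follows essentially the same route as the paper's proof: linearity of expectation for the Frobenius identity, the pointwise bound $\|H\|_2\le\|H\|_F$, Jensen's inequality, and the decomposition $\E[H_{jk}^2]=\Var(H_{jk})+(\E[H_{jk}])^2$ summed over the $M^2$ entries. The only cosmetic difference is the Jensen step. The paper writes $\E\|H\|_2\le\E\|H\|_F\le\sqrt{\E\|H\|_F^2}$, whereas you write $\E\|H\|_2\le\sqrt{\E\|H\|_2^2}\le\sqrt{\E\|H\|_F^2}$; these are equivalent.
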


Lemma~\ref{lem:entrywise-to-norms} converts entry-wise second-moment bounds into a global \emph{norm scale} for the Hessian at initialization. These estimates quantify typical magnitude (e.g., for Hessian--vector products) but do not, by themselves, control conditioning or extremal eigenvalues without additional spectral information. Applying this tool to the scalings established in Theorem~\ref{thm:scaling-explicit} yields the following two corollaries.

\begin{corollary}
\label{cor:matrix-scale-from-entrywise}
Under the conditions of Theorem~\ref{thm:scaling-explicit}. Suppose further that there exists a constant $c>0$,
independent of $n$, such that $|\E_\rho [H_{jk}]|^2 \le c\Var_\rho(H_{jk})$ for all $j,k$. Let $v_n$ denote the
entry-wise variance bound from Theorem~\ref{thm:scaling-explicit}. Then
\[
\E_\rho\|H\|_2^2 \le \E_\rho\|H\|_F^2 \lesssim M^2 v_n,
\qquad
\E_\rho\|H\|_2 \lesssim M\sqrt{v_n},
\]
with constants depending only on $c$.
\end{corollary}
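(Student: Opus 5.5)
The corollary follows by combining the entry-wise bounds of Theorem~\ref{thm:scaling-explicit} with Lemma~\ref{lem:entrywise-to-norms}. The only extra ingredient is the hypothesis $|\E_\rho[H_{jk}]|^2\le c\Var_\rho(H_{jk})$, which controls the mean contribution.

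First, fix $j,k$ and decompose the second moment as
\[
\E_\rho[H_{jk}^2]=\Var_\rho(H_{jk})+\big(\E_\rho[H_{jk}]\big)^2\le (1+c)\Var_\rho(H_{jk}).
\]
Theorem~\ref{thm:scaling-explicit} gives $\Var_\rho(H_{jk})\le v_n$ uniformly over all $j,k\in\{1,\dots,M\}$. Here $v_n=\tilde c\,e^{-\tilde\alpha n}$ in the global case, or $v_n=c_{\rm loc}V_G(k+2rL)/n$ in the local case. Hence $\E_\rho[H_{jk}^2]\le (1+c)v_n$ for every entry. Equivalently, in the notation of Lemma~\ref{lem:entrywise-to-norms}, we may take $\mu_n^2=c\,v_n$, which yields $v_n+\mu_n^2=(1+c)v_n$.

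Second, apply Lemma~\ref{lem:entrywise-to-norms} to the random matrix $H(\bth)$ with $\bth\sim\rho$. Its entries are bounded, since each is a finite combination of costs with $\|O\|\le1$, so finite second moments hold. Summing the entry bound over the $M^2$ pairs gives
\[
\E_\rho\|H\|_F^2=\sum_{j,k}\E_\rho[H_{jk}^2]\le (1+c)M^2v_n .
\]
The inequality $\E_\rho\|H\|_2^2\le\E_\rho\|H\|_F^2$ follows from $\|H\|_2\le\|H\|_F$ pointwise. Jensen's inequality then gives
\[
\E_\rho\|H\|_2\le\sqrt{\E_\rho\|H\|_2^2}\le\sqrt{1+c}\,M\sqrt{v_n}.
\]
So the implied constants are $1+c$ and $\sqrt{1+c}$, which depend only on $c$.

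There is no genuine obstacle here. The one point needing care is that the bound $v_n$ must be uniform in $(j,k)$, including the diagonal case $j=k$ as well as $j\ne k$. This uniformity holds because the shift sets have size at most $4$ and their coefficients are independent of $j$ and $k$. It is what allows the sum over all $M^2$ entries to produce the stated $M^2v_n$ scaling.
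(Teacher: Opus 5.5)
Your proof is correct and follows the paper's route: the corollary comes from Lemma~\ref{lem:entrywise-to-norms}, using the uniform entry-wise bound $\Var_\rho(H_{jk})\le v_n$ from Theorem~\ref{thm:scaling-explicit} together with $\mu_n^2=c\,v_n$, which gives the constants $1+c$ and $\sqrt{1+c}$. As an aside, under the i.i.d.\ uniform torus measure each $H_{jk}$ is the derivative of a $2\pi$-periodic function, so $\E_\rho[H_{jk}]=0$ and the extra mean hypothesis holds automatically.
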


While Corollary~\ref{cor:matrix-scale-from-entrywise} bounds the intrinsic curvature scale, the key practical question is the resolution cost in Definition~\ref{def:resolution} under finite-shot sampling.

\begin{corollary}[Absolute-accuracy shot cost at fixed parameters]
\label{cor:sample_complexity}
Let $\widehat H_{jk}(\bth)$ be the estimator computed via Algorithm~\ref{alg:psr_hessian} using $N$ independent shots per circuit evaluation. 
Since the parameter-shift rule is a fixed linear combination, the shot-noise contribution to the estimator variance satisfies
\begin{equation}
\Var_{\mathrm{sh}}\!\left(\widehat H_{jk}(\bth)\mid\bth\right) \le \frac{c_{\mathcal S}\sigma^2}{N},
\end{equation}
where $\sigma^2 \le 1$ when $\|O\|\le 1$ and $c_{\mathcal S} := \sum_{s \in \mathcal S_{jk}} w_s^2$ depends only on the chosen shift rule.
Consequently, achieving $\mathrm{RMSE}\le \varepsilon$ for estimating $H_{jk}(\bth)$ at fixed $\bth$ requires $N \gtrsim c_{\mathcal S}\sigma^2/\varepsilon^2$.
This bound concerns \emph{absolute} accuracy at fixed $\bth$; resolvability relative to the initialization scale is addressed in Definition~\ref{def:resolution} and Corollary~\ref{cor:resolution_cost}.
\end{corollary}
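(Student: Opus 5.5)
The plan is to exploit the fact that, at fixed $\bth$, the estimator $\widehat H_{jk}(\bth)=\sum_{s\in\mathcal S_{jk}} w_s\,\widehat C(\bth\oplus s)$ is a fixed linear combination of sample means. Each sample mean is built from shots on a separately prepared circuit, so the sample means are independent across shifts. I would proceed in three steps: compute the per-shift shot variance, add these variances using independence, and then invert the resulting bound to get the RMSE condition.

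\emph{Step 1 (single shifted evaluation).} For a fixed shift $s$, each shot returns an eigenvalue outcome $X$ of $O$, or of a fixed measurement decomposition of $O$. This outcome has conditional mean $C(\bth\oplus s)$ and satisfies $|X|\le\|O\|\le 1$ by Assumption~\ref{ass:obs}. Hence
\[
\Var_{\mathrm{sh}}(X\mid\bth)\le \E_{\mathrm{sh}}[X^2\mid\bth]\le 1 .
\]
I will define $\sigma^2$ as the maximum over $s\in\mathcal S_{jk}$ of these single-shot variances, which gives $\sigma^2\le 1$. Averaging $N$ i.i.d.\ shots then gives $\Var_{\mathrm{sh}}(\widehat C(\bth\oplus s)\mid\bth)\le \sigma^2/N$. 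The same argument gives unbiasedness, $\E_{\mathrm{sh}}[\widehat C(\bth\oplus s)\mid\bth]=C(\bth\oplus s)$. Combined with the exact identities \eqref{eq:psr2-off}--\eqref{eq:psr2-diag}, this shows that $\widehat H_{jk}$ is unbiased for $H_{jk}(\bth)$.

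\emph{Step 2 (combining shifts).} Since the circuits for distinct shifts are run independently, all conditional covariances between different shifts vanish. Bilinearity then gives
\[
\Var_{\mathrm{sh}}(\widehat H_{jk}\mid\bth)=\sum_{s\in\mathcal S_{jk}} w_s^2\,\Var_{\mathrm{sh}}(\widehat C(\bth\oplus s)\mid\bth)\le \frac{c_{\mathcal S}\sigma^2}{N},
\qquad c_{\mathcal S}:=\sum_{s\in\mathcal S_{jk}} w_s^2 .
\]
For the diagonal rule \eqref{eq:psr2-diag}, the unshifted point $C(\bth)$ appears with weight $-2/4$. I treat it as a single evaluation carrying that weight, so $c_{\mathcal S}=(1+4+1)/16$; this term is where the potential pitfall lies. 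If the same samples were instead reused across correlated terms, I would fall back on Cauchy--Schwarz and obtain $(\sum_s|w_s|)^2$ in place of $c_{\mathcal S}$. That is still an $O(1)$ constant.

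\emph{Step 3 (RMSE).} Unbiasedness gives
\[
\mathrm{RMSE}^2=\E_{\mathrm{sh}}\big[(\widehat H_{jk}-H_{jk}(\bth))^2\mid\bth\big]=\Var_{\mathrm{sh}}(\widehat H_{jk}\mid\bth)\le c_{\mathcal S}\sigma^2/N .
\]
Therefore $N\ge c_{\mathcal S}\sigma^2/\varepsilon^2$ suffices for $\mathrm{RMSE}\le\varepsilon$, which is the scaling $N\gtrsim c_{\mathcal S}\sigma^2/\varepsilon^2$ in the statement.

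This $\gtrsim$ should be read as an order-of-magnitude requirement, not a strict lower bound. The reason is that $\sigma^2$ is the actual per-shot variance, and whenever that variance is $\Theta(1)$ the bound is attained up to constants. I expect no real obstacle in this argument. The only step needing care is Step 1's identification of $\sigma^2$ with the single-shot variance under the chosen measurement scheme, together with the independence across shifts used in Step 2.
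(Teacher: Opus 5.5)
Your proposal is correct and follows the paper's route. The paper justifies the corollary only by noting that the shift rule is a fixed linear combination; you make that explicit: independent per-shift sample means with per-shot variance at most $\sigma^2\le 1$, combined with weights $w_s^2$, and then $\mathrm{RMSE}^2=\Var_{\mathrm{sh}}(\widehat H_{jk}\mid\bth)$ by unbiasedness. Your side remarks flag the points the paper leaves implicit: $N\ge c_{\mathcal S}\sigma^2/\varepsilon^2$ is really a sufficient condition, and $|X|\le\|O\|$ assumes measuring in an eigenbasis of $O$ rather than a general term-wise decomposition.
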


\begin{corollary}[Resolution shot requirement]
\label{cor:resolution_cost}
Under the assumptions of Corollary~\ref{cor:sample_complexity}, the resolution criterion in Definition~\ref{def:resolution} is ensured whenever
\[
N \gtrsim \frac{c_{\mathcal S}\sigma^2}{\eta\Var_\rho(H_{jk})}.
\]
In particular, combining this with the variance bounds in Theorem~\ref{thm:scaling-explicit} yields exponential scaling of the resolution cost for global objectives and polynomial scaling for term-wise local objectives (under sub-extensive backward-lightcones).
\end{corollary}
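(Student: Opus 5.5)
The plan is to obtain Corollary~\ref{cor:resolution_cost} by averaging the fixed-$\bth$ shot-noise bound of Corollary~\ref{cor:sample_complexity} over the initialization measure and comparing it with the target in Definition~\ref{def:resolution}. First, record the per-$\bth$ bound. For each shift $s\in\mathcal S_{jk}$, the estimate $\widehat C(\bth\oplus s)$ is a sample mean of $N$ i.i.d.\ measurement outcomes. Each outcome has conditional variance at most $\sigma^2(\bth\oplus s)\le\|O\|^2\le 1$ (Assumption~\ref{ass:obs}). For term-wise local costs, I would take $\sigma^2$ to be the single-shot variance of the chosen unbiased estimator of $\langle O\rangle$; it is still bounded by $1$ because all $\|O_v\|\le1$ and the $1/n$ average is a convex combination. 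The different shifted circuits are run with independent shots, so conditional on $\bth$ the cross-covariances vanish and
\[
\Var_{\mathrm{sh}}(\widehat H_{jk}\mid\bth)=\sum_{s\in\mathcal S_{jk}} w_s^2\,\frac{\sigma^2(\bth\oplus s)}{N}\le \frac{c_{\mathcal S}\sigma^2}{N},
\]
with $\sigma^2:=\sup_{\bth}\sigma^2(\bth)$. This is exactly Corollary~\ref{cor:sample_complexity}.

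Second, the right-hand side does not depend on $\bth$, so taking $\E_{\bth\sim\rho}$ gives $\E_\rho[\Var_{\mathrm{sh}}(\widehat H_{jk}\mid\bth)]\le c_{\mathcal S}\sigma^2/N$. The resolution condition therefore holds as soon as $c_{\mathcal S}\sigma^2/N\le\eta\Var_\rho(H_{jk})$, that is, for
\[
N\ge \frac{c_{\mathcal S}\sigma^2}{\eta\,\Var_\rho(H_{jk})}.
\]
Consequently $N_{\mathrm{res}}$ is at most this quantity, rounded up. If $\Var_\rho(H_{jk})=0$, the statement is vacuous and no finite $N$ is asserted.

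Third, substitute the bounds from Theorem~\ref{thm:scaling-explicit}. Because the sufficient shot count is inversely proportional to $\Var_\rho(H_{jk})$, an upper bound on the variance translates into the scale of the sufficient shot count. For global objectives, \eqref{eq:global_var_bound} gives $c_{\mathcal S}\sigma^2/(\eta\Var_\rho(H_{jk}))\ge \eta^{-1}c_{\mathcal S}\sigma^2\tilde c^{-1}e^{\tilde\alpha n}$. Whenever $\sigma^2$ is bounded below, which is typical for a $\pm1$-valued global observable, the sufficient shot count grows exponentially in $n$. For local objectives, \eqref{eq:local_bound} gives a sufficient count of order $n/V_G(k+2rL)$, which is polynomial in $n$ under sub-extensive lightcones. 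Here the averaged local estimator has $\sigma^2$ of order at most $1$, and possibly smaller.

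The main obstacle is interpretive rather than technical. Theorem~\ref{thm:scaling-explicit} only provides upper bounds on $\Var_\rho(H_{jk})$. These yield lower bounds on the sufficient count $c_{\mathcal S}\sigma^2/(\eta\Var_\rho)$, which is the exponential statement for global costs, but they do not show that this count is necessary. For the global case, I would therefore phrase the conclusion as ``the sufficient shot count scales exponentially,'' and argue necessity separately. The argument would use the identity $\E_\rho\Var_{\mathrm{sh}}=\E_\rho\sum_s w_s^2\sigma^2(\bth\oplus s)/N$, which holds with equality, together with a lower bound on the average single-shot variance. For the local case, I would say ``polynomial'' only under the additional proviso that the true variance is not much smaller than the bound. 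This matches the paper's remark that the local bound is not exponent-tight.
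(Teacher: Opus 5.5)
Your argument is correct and matches the paper's implicit reasoning: the bound $\Var_{\mathrm{sh}}(\widehat H_{jk}\mid\bth)\le c_{\mathcal S}\sigma^2/N$ from Corollary~\ref{cor:sample_complexity} is uniform in $\bth$, so you average it over $\rho$ and solve $c_{\mathcal S}\sigma^2/N\le\eta\Var_\rho(H_{jk})$ for $N$. Your caveat is also well founded and goes beyond the paper: Theorem~\ref{thm:scaling-explicit} only upper-bounds $\Var_\rho(H_{jk})$, so it yields only a lower bound on this sufficient count (of order $e^{\tilde\alpha n}$ for global costs, resp.\ $n/V_G(k+2rL)$ for local ones), and the paper does not supply the extra inputs you identify for necessity in the global case (a lower bound on the averaged single-shot variance) or for a polynomial upper bound in the local case (a matching lower bound on $\Var_\rho(H_{jk})$).
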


Combining Theorem~\ref{thm:scaling-explicit} with Corollary~\ref{cor:resolution_cost} shows that the shot requirement for resolving Hessian entries is governed by the initialization variance $\Var_\rho(H_{jk})$. For global objectives, exponential decay of $\Var_\rho(H_{jk})$ forces an exponentially growing resolution cost in $n$. For term-wise local objectives in bounded-depth circuits, $\Var_\rho(H_{jk})$ is bounded by a polynomial controlled by backward-lightcone growth, leading to polynomial resolution cost when the lightcone remains sub-extensive.

\section{Numerical Experiments}
\label{sec:experiments}

This section validates the variance-scaling predictions of \S~\ref{sec:analysis} and studies the empirical implications of these scaling laws under finite-shot sampling. After describing the simulation setting and variance estimators, we present: (i) system-size and depth scaling of representative Hessian-entry variances, including a VQE benchmark for a $k$-local Hamiltonian; (ii) spectral diagnostics of $H(\bth)$ at initialization, including histogram estimates and scalar summary metrics; and (iii) finite-shot optimization trajectories with matched oracle calls to illustrate the practical effect of the resolution criterion (Definition~\ref{def:resolution}) near initialization.

\subsection{Experimental setup}
\label{subsec:setup}

All simulations were performed in \texttt{Python} using \texttt{PennyLane}~\cite{bergholm2018pennylane} with an exact statevector backend, which ensures that the reported statistics isolate intrinsic landscape fluctuations. We instantiate \eqref{eq:pqc} as a layered hardware-efficient ansatz: in each layer $\ell$, $V_\ell$ is a fixed nearest-neighbor brickwork CNOT pattern and the parametrized gates are single-qubit $R_y$ rotations,
\[
\exp(-i\theta_{\ell,j}G_{\ell,j}) = R_y(\theta_{\ell,j}), 
\qquad 
G_{\ell,j}=\tfrac{1}{2}Y_{q(\ell,j)},
\]
where $q(\ell,j)$ maps the gate index $j\in\{1,\dots,m_\ell\}$ to a qubit wire. Here $m_\ell=n$ and the total number of parameters is $M=\sum_{\ell=1}^L m_\ell=nL$. Parameters are initialized i.i.d.\ as $\theta_{\ell,j}\sim \mathrm{U}([-\pi,\pi))$, which is equivalent to $\mathrm{U}([0,2\pi))$ up to a global shift.

We consider two objectives of different locality,
\begin{align}
C_{\mathrm{global}}(\bth) = \langle \psi(\bth)|Z^{\otimes n}| \psi(\bth) \rangle, \quad 
C_{\mathrm{local}}(\bth)  = \frac{1}{n} \sum_{j=1}^{n} \langle \psi(\bth) | Z_j | \psi(\bth) \rangle.
\end{align}
For each sampled initialization $\bth$ we evaluate gradients and Hessian entries via the parameter-shift identities \eqref{eq:psr1}--\eqref{eq:psr2-diag} on the statevector backend. Unless otherwise stated, the scaling experiments use $n \in \{2,\dots,16\}$, $L \in \{1,\dots,12\}$, and $N_s=200$ i.i.d.\ initializations per configuration $(n,L)$; variances are computed as empirical second central moments over this ensemble. Results are shown for $(j,k)=(1,2)$; other index choices exhibit the same scaling up to constant prefactors.

We estimate $\Var_\rho(\cdot)$ by the sample variance $\widehat{\Var}$ computed from $N_s$ i.i.d.\ random initializations $\bth\sim\rho$.
To extract scaling with system size $n$, we fit exponential laws by least squares on the log scale and polynomial laws by least squares in log--log form. When reported, error bars denote 95\% bootstrap confidence intervals for $\widehat{\Var}$ obtained by resampling the $N_s$ initializations.

\subsection{Scaling of Hessian variance}
\label{sec:sub:sub:scaling}

We first validate the system-size and depth scaling predicted by Theorem~\ref{thm:scaling-explicit}.

\subsubsection{Dependence on system size}
We examine how representative Hessian-entry variances at random initialization scale with the number of qubits $n$.

As a preliminary check for the global case (cf.~Assumption~\ref{ass:GlobalDesign}), we estimate the objective fluctuation $\Var_\rho[C_{\mathrm{global}}(\bth)]$
from the same initialization ensembles. Over $n\in\{2,\ldots,16\}$ the data are well described by an exponential fit, yielding $\Var_\rho[C_{\mathrm{global}}(\bth)] \\\approx \exp(-0.61\, n)$, which numerically verifies the assumption.

Fig.~\ref{fig:scaling} then reports empirical variances of a diagonal and an off-diagonal entry, $\Var_\rho(H_{jj})$ and
$\Var_\rho(H_{jk})$, as functions of $n$, for both the global and local objectives. For $C_{\mathrm{global}}$ (red markers), both diagonal and off-diagonal entries exhibit clear exponential decay with $n$.
Least-squares fits of $\log \Var_\rho(H_{jj})$ and $\log \Var_\rho(H_{jk})$ yield
\[
\Var_\rho(H_{jj}) \approx \exp(-0.57\, n), \qquad 
\Var_\rho(H_{jk}) \approx \exp(-0.64\, n),
\]
consistent with Theorem~\ref{thm:scaling-explicit} under Assumption~\ref{ass:GlobalDesign}. Via
Corollary~\ref{cor:sample_complexity}, this implies that the target resolution required to track typical entry-wise
curvature becomes exponentially small as $n$ grows.

For $C_{\mathrm{local}}$ (blue markers), the scaling is qualitatively different. Fitting $\log \Var_\rho(H_{jj})$ and
$\log \Var_\rho(H_{jk})$ against $\log n$ yields power-law behavior over the tested range,
\[
\Var_\rho(H_{jj}) \approx n^{-1.96}, \qquad 
\Var_\rho(H_{jk}) \approx n^{-2.05},
\]
indicating polynomial suppression consistent with the locality-controlled regime of Theorem~\ref{thm:scaling-explicit}.
We emphasize that \eqref{eq:local_bound} is an upper bound; the observed exponents should be interpreted as
instance-dependent rather than universal.

\begin{figure}[htbp]
    \centering
    \includegraphics[width=0.45\textwidth]{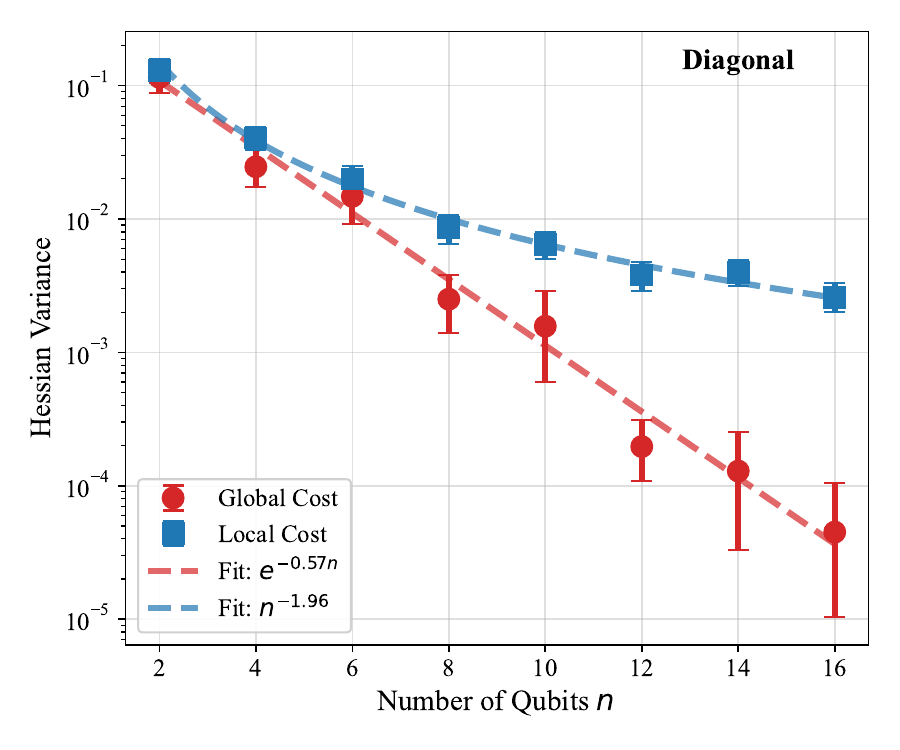}~~
    \includegraphics[width=0.45\textwidth]{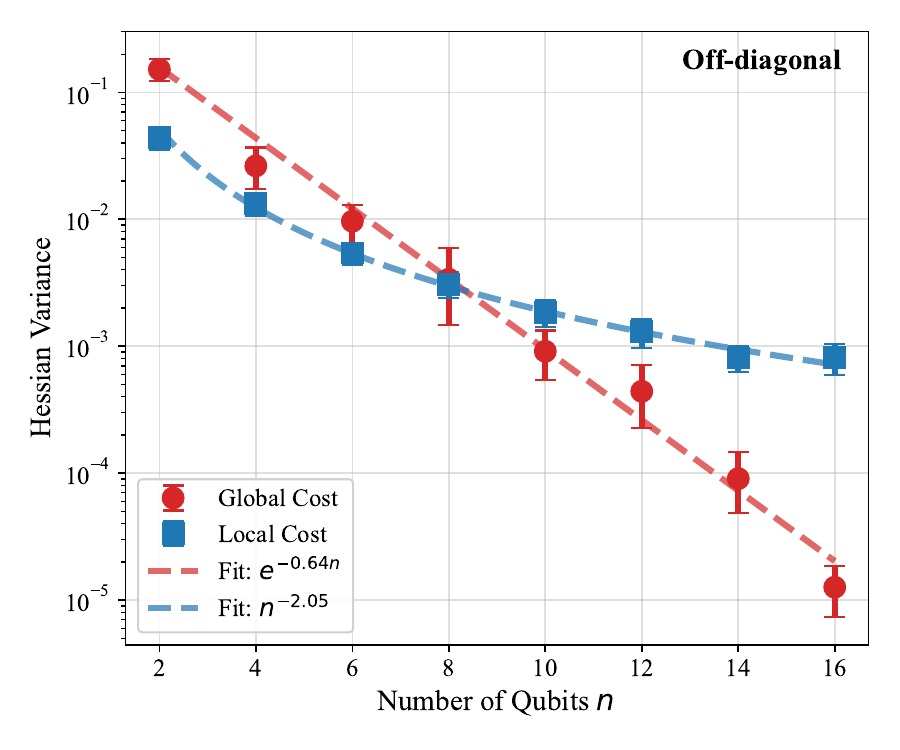}~~
    \caption{\textbf{Hessian-entry variance vs.\ system size.} $\Var[H_{jj}]$ (left) and $\Var[H_{jk}]$ (right) versus $n$ for $C_{\mathrm{global}}$ (red) and $C_{\mathrm{local}}$ (blue).}
    \label{fig:scaling}
\end{figure}

\subsubsection{VQE with Transverse Field Ising Model}

To validate our theoretical findings in a realistic variational quantum eigensolver (VQE) setting~\cite{kandala2017hardware}, we replace the toy objective (cf.~\S~\ref{subsec:setup}) with the Hamiltonian of the 1D Transverse Field Ising Model (TFIM)~\cite{pfeuty1970one} under periodic boundary conditions:
\begin{equation}
    H_{\text{TFIM}} = -J \sum_{i=1}^n Z_i Z_{i+1} - h \sum_{i=1}^n X_i,
\end{equation}
where we set $J=h=1.0$ (the critical point). The local cost function is defined as the term-wise averaged energy density $C_{\text{local}} = \langle H_{\text{TFIM}} \rangle / (2n)$. We compare this against a global parity observable $O = \bigotimes_{i=1}^n Z_i$. The variances of the Hessian entries are computed over $150$ random initializations for system sizes $n \in [4, 14]$.

\begin{figure}[t]
    \centering
    \includegraphics[width=0.45\textwidth]{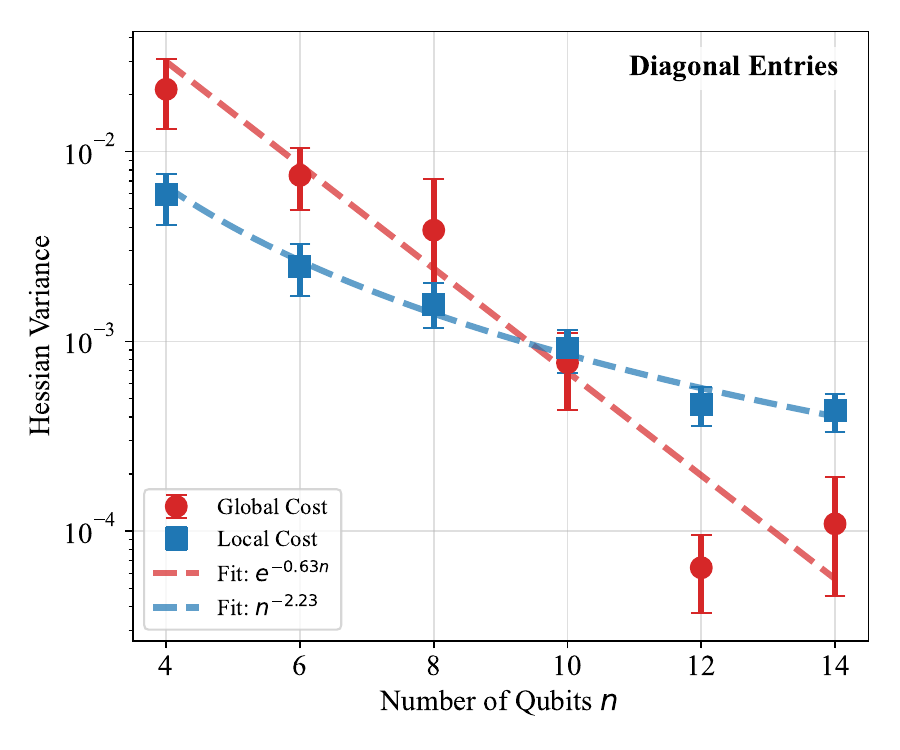}~~
    \includegraphics[width=0.45\textwidth]{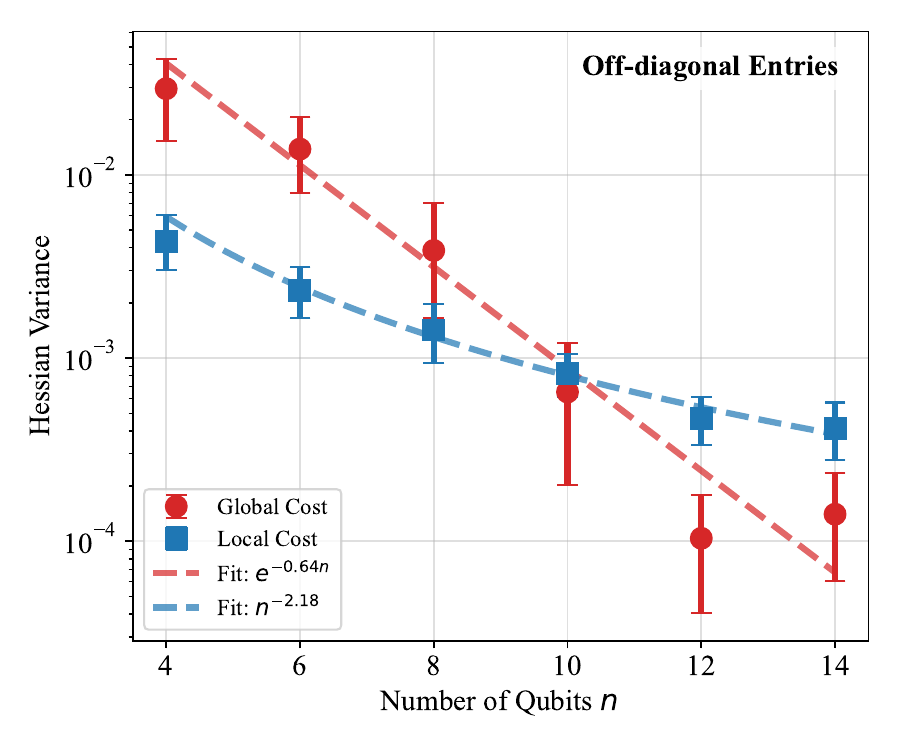}~~
    \caption{\textbf{Hessian-entry variance vs.\ system size in a VQE task (TFIM).} 
    Comparison of the variances of diagonal (left) and off-diagonal (right) Hessian entries between the VQE local cost (blue squares) and the global parity cost (red circles). }
    \label{fig:vqe_scaling}
\end{figure}

The results are presented in Fig.~\ref{fig:vqe_scaling}. The difference in scaling behavior is distinct: the global cost variance decays exponentially ($\mathrm{Var}\propto e^{-0.63n}$), rapidly hitting the precision floor, whereas the local VQE cost follows a clear power-law decay ($\mathrm{Var}\propto n^{-2.2}$), verifying that the polynomial scaling guarantee of Theorem~\ref{thm:scaling-explicit} holds for physical Hamiltonians with non-commuting terms. Consistent behavior is observed for both representative diagonal and off-diagonal Hessian entries.

\subsubsection{Dependence on depth}

We next examine how Hessian-entry fluctuations depend on circuit depth $L$, which provides a numerical view of how increasing operator spreading affects second-order resolvability. Fig.~\ref{fig:depth} reports empirical variances of representative diagonal and off-diagonal entries at fixed system size $n=16$ over the depth range $L\in\{1,\ldots,12\}$.

For $C_{\mathrm{global}}$ (red markers), both $\Var_\rho(H_{jj})$ and $\Var_\rho(H_{jk})$ are already on the order of $10^{-5}$ by depth $L=2$ and exhibit only weak additional dependence on $L$ over the explored range. This behavior is consistent with the global-support regime: backward lightcones overlap extensively even at shallow depth, so cancellations can suppress entry-wise curvature fluctuations early, leaving little variation to resolve at the scale of our reported statistics.

For $C_{\mathrm{local}}$ (blue markers), the variances are substantially larger at small depth (exceeding the global values by more than two orders of magnitude at $L=2$ in our experiments) and decrease monotonically as $L$ increases. This trend aligns with the operator-spreading picture in \S~\ref{sec:analysis}: as $L$ grows, the backward lightcone of each local term expands, increasing the number of contributing components and enhancing cancellations, which reduces the typical curvature scale. Notably, across the entire depth range in Fig.~\ref{fig:depth}, the local objective retains markedly larger Hessian-entry fluctuations than the global objective, supporting the conclusion that locality delays curvature concentration at moderate depths. This depth dependence is consistent with the growth factor $V_G(k+2rL)$ appearing in~\eqref{eq:local_bound}.

\begin{figure}[htbp]
    \centering
    \includegraphics[width=0.45\textwidth]{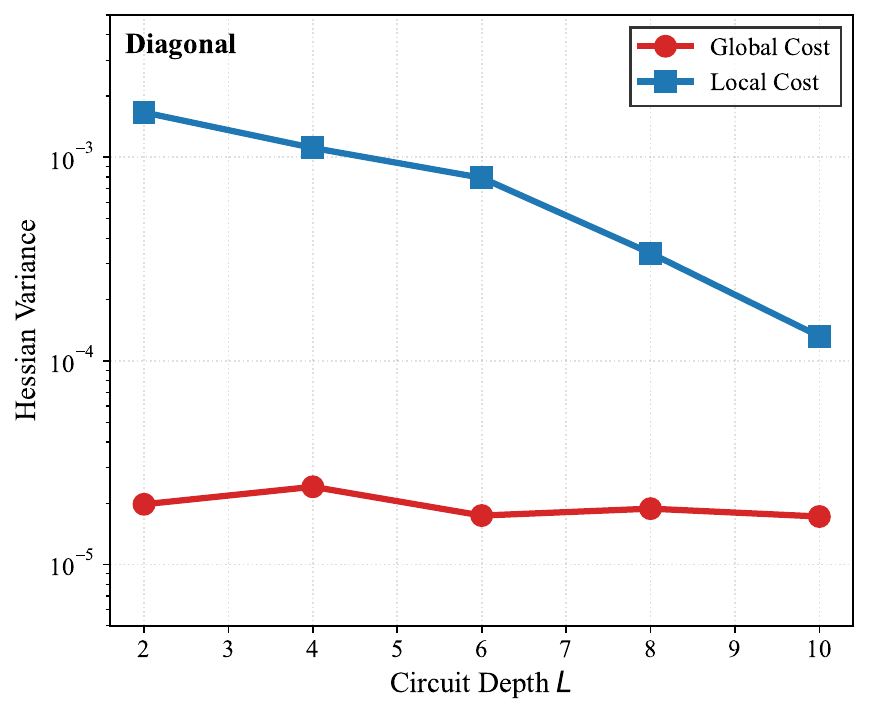}~~
    \includegraphics[width=0.45\textwidth]{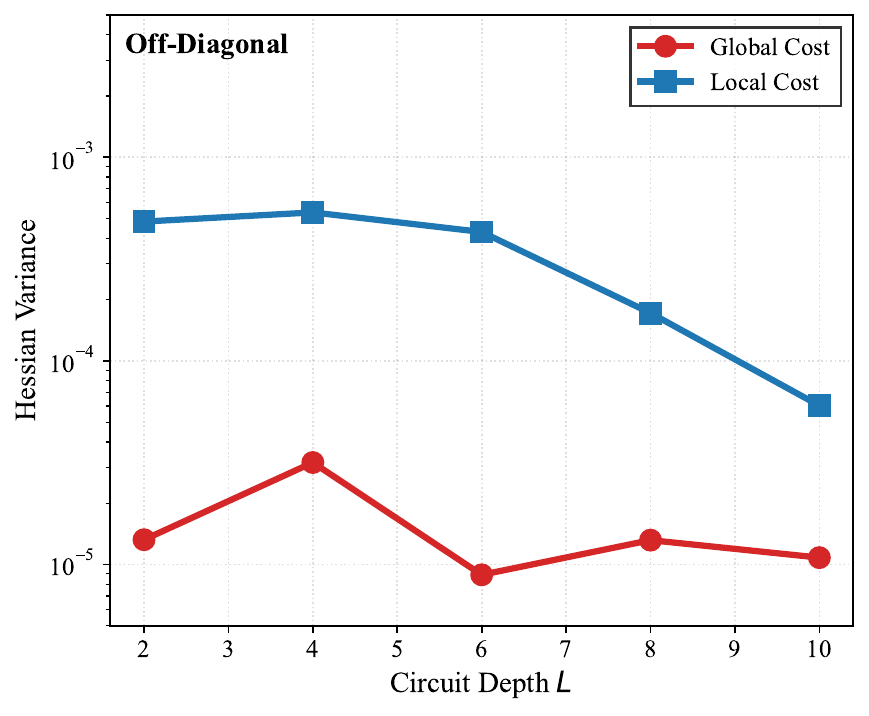}~~
    \caption{\textbf{Depth dependence of Hessian-entry variance.}
$\Var_\rho(H_{jj})$ (left) and $\Var_\rho(H_{jk})$ (right) versus depth $L$ at $n=16$ for $C_{\mathrm{global}}$ (red) and $C_{\mathrm{local}}$ (blue).}
    \label{fig:depth}
\end{figure}

\subsubsection{Robustness against measurement shot noise}

We assess finite-shot effects on Hessian-entry estimation at random initialization using the finite-shift estimator
$\widehat H_{jk}(\bth)=\sum_{s\in\mathcal S_{jk}} w_s\widehat C(\bth\oplus s)$, where each $\widehat C(\cdot)$ is estimated from $N$ shots. Corollary~\ref{cor:sample_complexity} implies $\Var_{\mathrm{sh}}(\widehat H_{jk}\mid\bth)=O(N^{-1})$, and averaging over $\bth\sim\rho$ yields
\[
\Var_{\rho,\mathrm{sh}}(\widehat H_{jk})
=\Var_\rho(H_{jk})+\E_{\bth\sim\rho}\left[\Var_{\mathrm{sh}}(\widehat H_{jk}\mid\bth)\right].
\]
Accordingly, we fit the two-term model $\Var_{\rho,\mathrm{sh}}(\widehat H_{jk}) \approx a+b/N$, where $a$ estimates the intrinsic initialization variance up to shift-rule constants.

In Fig.~\ref{fig:shot_noise}(a), the data exhibit an $N$-independent floor and a $1/N$ decay, consistent with this model. Figure~\ref{fig:shot_noise}(b) reports an \emph{absolute-tolerance} requirement
$N(\varepsilon):=\min\{N:\sqrt{\Var_{\rho,\mathrm{sh}}(\widehat H_{jk})}\le \varepsilon\}$ with $\varepsilon=0.05$; once $a\ll\varepsilon^2$, this criterion becomes shot-noise limited and $N(\varepsilon)\approx b/\varepsilon^2$, so the curves can saturate (or decrease) as $n$ increases. The scaling relevant to \emph{resolvability at initialization} is instead captured by $a$ (equivalently, $\Var_\rho(H_{jk})$ up to shift-rule constants) and the resolution criterion in Definition~\ref{def:resolution} (Corollary~\ref{cor:resolution_cost}), which ties the required shots to $\Var_\rho(H_{jk})^{-1}$.

\begin{figure}[t]
    \centering
    \includegraphics[width=0.45\textwidth]{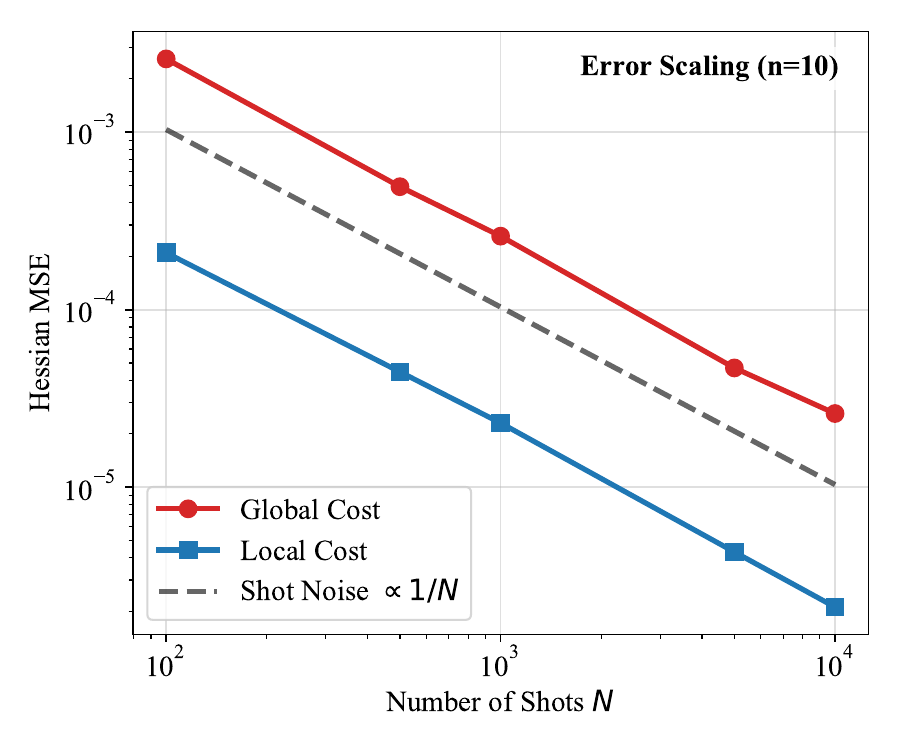}~~
    \includegraphics[width=0.45\textwidth]{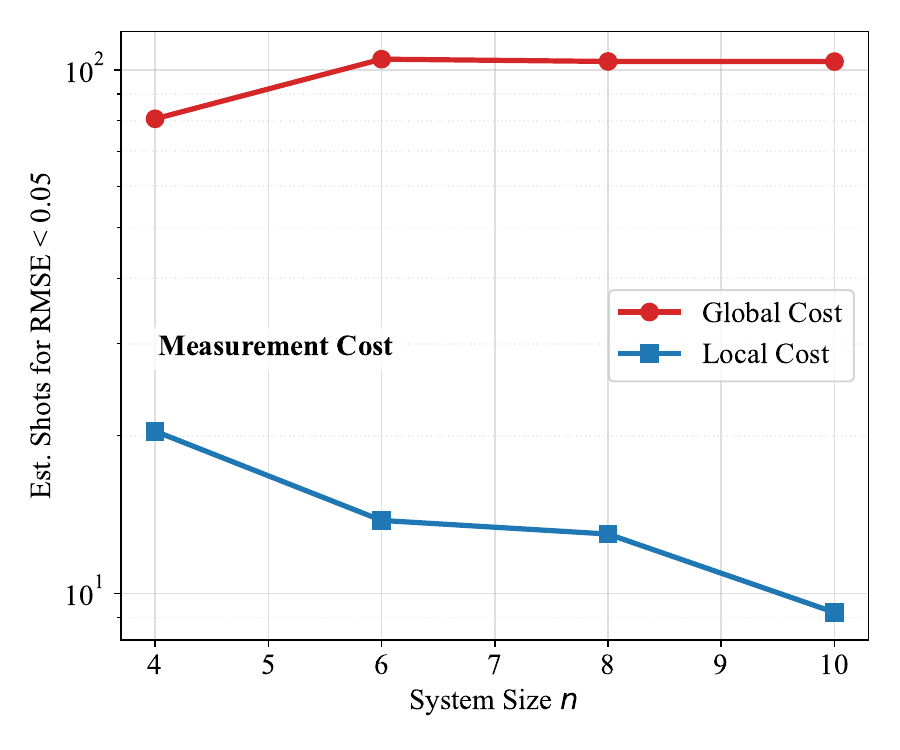}~~
    \caption{\textbf{Finite-shot effects on Hessian-entry estimation.}
(Left) $\Var_{\rho,\mathrm{sh}}(\widehat H_{jk})$ versus shots $N$. (Right) Absolute-tolerance requirement $N(\varepsilon)$ with $\varepsilon=0.05$ versus $n$.}
    \label{fig:shot_noise}
\end{figure}

\subsection{Spectral structure}
\label{sec:spectrum}

While our theory controls entry-wise curvature fluctuations, curvature-aware procedures also depend on how curvature
is distributed across parameter-space directions. We therefore report complementary spectral diagnostics based on the
empirical eigenvalue distribution of the Hessian at random initialization. These results are presented as numerical
diagnostics and are not interpreted as conditioning guarantees.

\paragraph{Empirical eigenspectrum}
Fig.~\ref{fig:spectrum} shows histogram estimates (log-density scale) of Hessian eigenvalues for global and term-wise
local averaged objectives at depth $L=4$ and increasing system size. As $n$ grows, the spectrum associated with the global
objective becomes increasingly concentrated near the origin, whereas the spectrum for the local objective remains
comparatively broader over the same range. This qualitative contrast mirrors the entry-wise variance behavior:
global objectives exhibit a rapidly shrinking initialization-scale curvature distribution, while locality delays such
concentration for term-wise averaged costs.

\begin{figure}[htbp]
    \centering
    \includegraphics[width=0.45\textwidth]{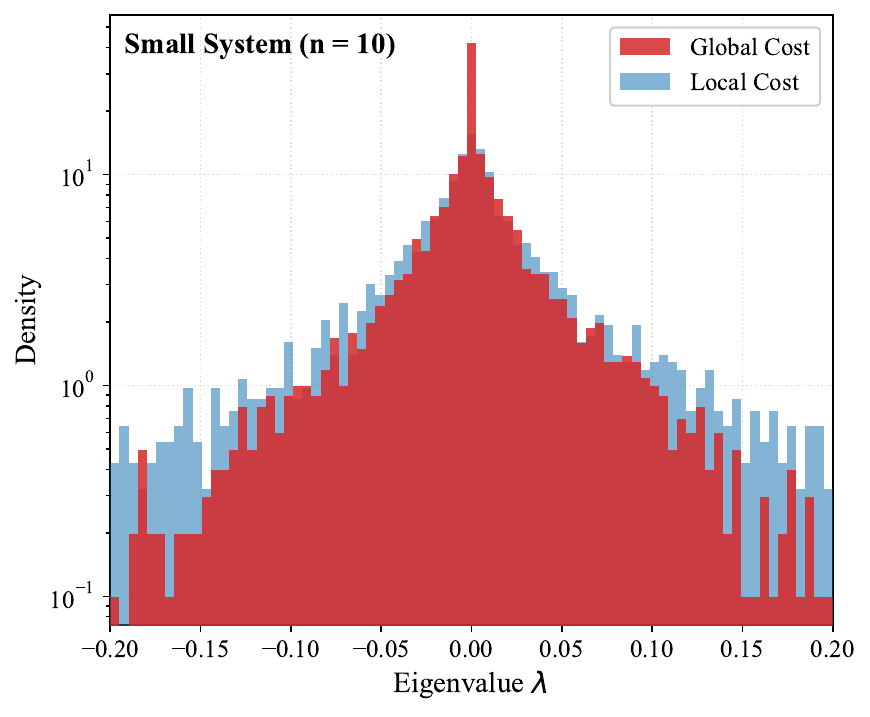}~~
    \includegraphics[width=0.45\textwidth]{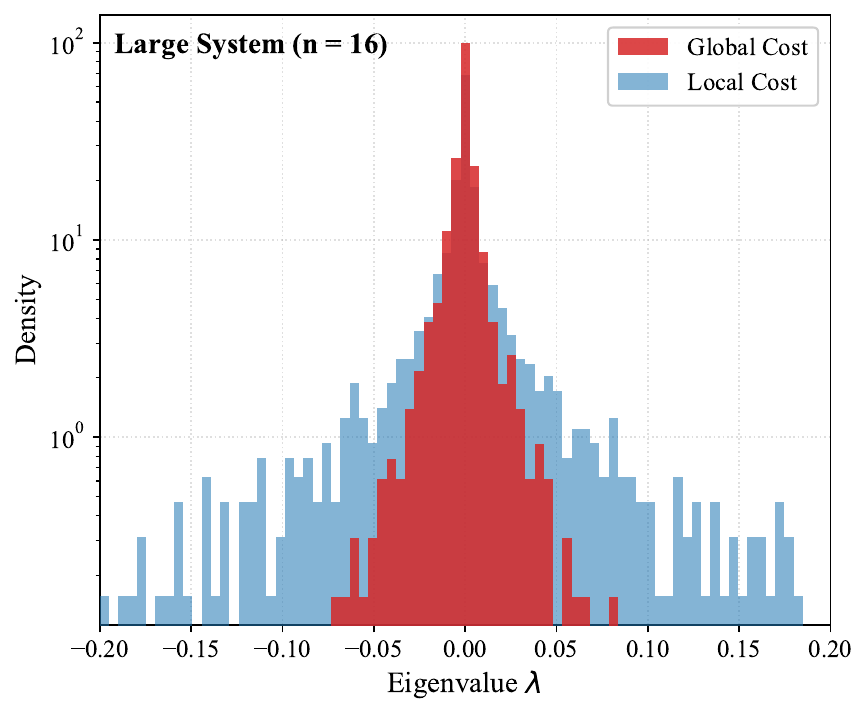}~~
    \caption{\textbf{Empirical Hessian eigenspectra at depth $L=4$.}
    Histogram estimates (log-density scale) of Hessian eigenvalues for $C_{\mathrm{global}}$ (red) and
    $C_{\mathrm{local}}$ (blue) at $n=10$ (left) and $n=16$ (right).}
    \label{fig:spectrum}
\end{figure}

\paragraph{Scalar spectral summaries}
To complement the histograms, we summarize the spectrum using two scalar metrics. The first is the root-mean-square (RMS)
eigenvalue $\lambda_{\mathrm{RMS}} := \sqrt{\frac{1}{M}\sum_{i=1}^{M}\lambda_i^2}$, which captures an overall curvature
magnitude scale. The second is a thresholded near-zero eigenvalue fraction, defined for a prescribed numerical tolerance
$\varepsilon>0$ as
$
\mathrm{Deg}_{\varepsilon} := \frac{1}{M}\#\{i:\ |\lambda_i|<\varepsilon\}.
$
In Fig.~\ref{fig:spectral_metrics} we report $\mathrm{Deg}_{\varepsilon}$ with $\varepsilon=10^{-4}$. We emphasize that
$\mathrm{Deg}_{\varepsilon}$ is resolution-dependent and should not be interpreted as a definitive statement about exact
rank deficiency.

Fig.~\ref{fig:spectral_metrics} (Left) shows that $\lambda_{\mathrm{RMS}}$ decreases with $n$ for both objectives, consistent
with an overall contraction of typical curvature magnitudes at initialization. Fig.~\ref{fig:spectral_metrics} (Right) shows
that, at the same numerical tolerance, the global objective exhibits a substantially larger near-zero fraction than the local
averaged objective. Interpreted at fixed resolution, this indicates more directions that appear numerically flat at initialization
for the global objective than for the term-wise averaged local objective.

\begin{figure}[t]
    \centering
    \includegraphics[width=0.45\textwidth]{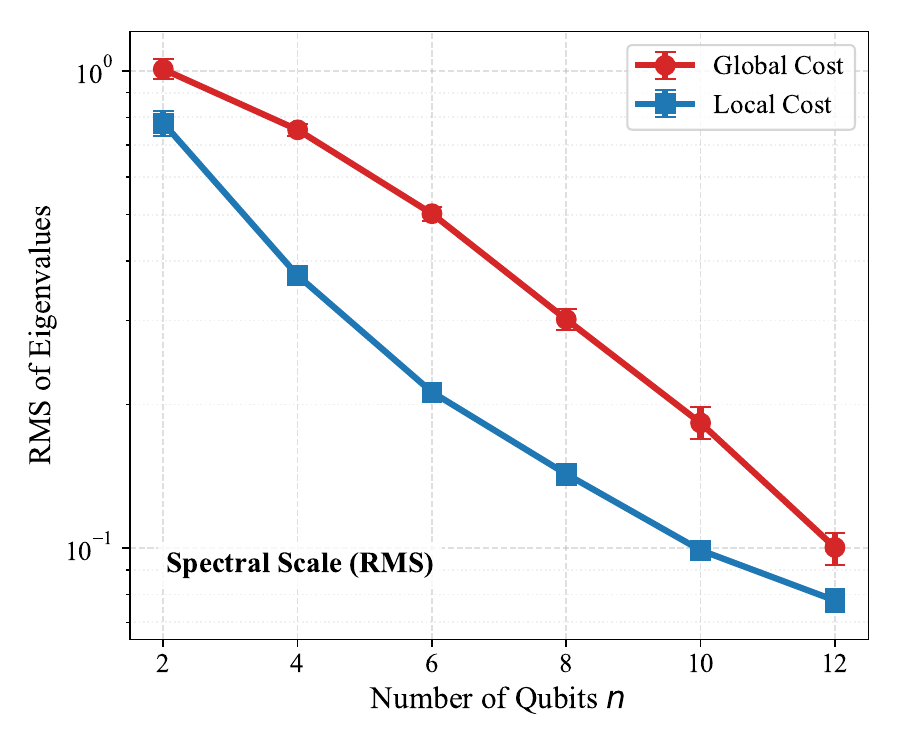}~~
    \includegraphics[width=0.45\textwidth]{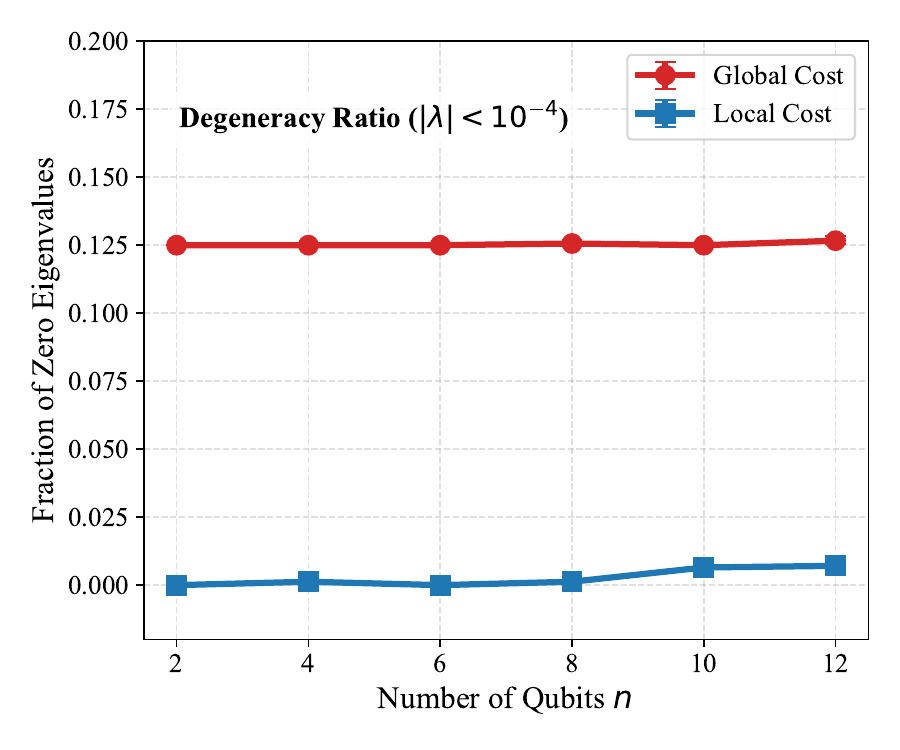}~~
    \caption{\textbf{Quantitative diagnostics of the Hessian spectrum at initialization.}
    (Left) RMS eigenvalue versus $n$ for the global cost (red circles) and local averaged cost (blue squares).
    (Right) Thresholded near-zero eigenvalue fraction with $\varepsilon=10^{-4}$.}
    \label{fig:spectral_metrics}
\end{figure}


\subsection{Finite-shot behavior near initialization}
\label{sec:algorithmic_implications}

The variance scalings in \S~\ref{sec:analysis} suggest that, under finite-shot sampling, global and term-wise local objectives can exhibit markedly different signal-to-noise behavior near random initialization. To illustrate this effect, we report matched optimization trajectories for a global objective and a term-wise local averaged objective under the same measurement protocol.

We fix $(n,L)=(18,8)$ and run stochastic gradient descent (SGD)~\cite{amari1993backpropagation} and the quantum natural gradient (QNG)~\cite{stokes2020quantum}. QNG is included only as a geometry-aware first-order baseline; the goal is not an optimizer comparison, but to probe how finite sampling affects observable progress near initialization.

Fig.~\ref{fig:optimization} reports the mean cost trajectories with $\pm1$ standard deviation bands over random initializations. For the global objective $C_{\mathrm{global}}$ (red), both methods exhibit trajectories that are nearly indistinguishable across seeds and show little change over the plotted horizon at the chosen sampling level. However, the term-wise local averaged objective $C_{\mathrm{local}}$ (blue) decreases consistently and displays visible seed-to-seed variability under the same protocol. These trends are consistent with \S~\ref{sec:analysis}: for global objectives, $\Var_\rho(H_{jk})$ decays exponentially in $n$, so at a fixed shot level the corresponding entry-wise fluctuations become difficult to resolve; for term-wise local objectives, the variance bounds are polynomial, so the same sampling protocol can still reveal measurable variation across initializations and iterations.

\begin{figure}[htbp]
    \centering
    \includegraphics[width=0.45\textwidth]{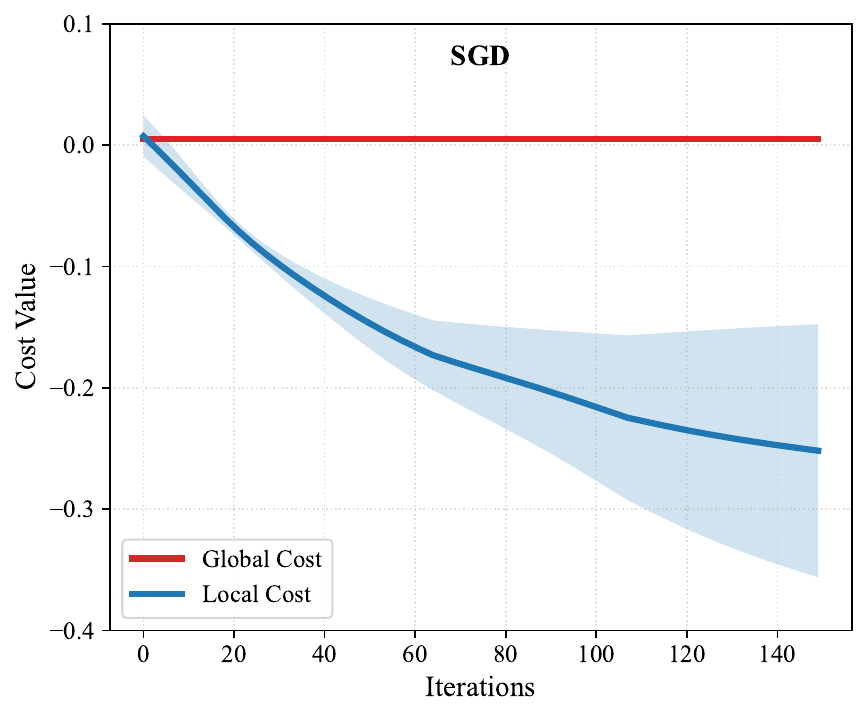}~~
    \includegraphics[width=0.45\textwidth]{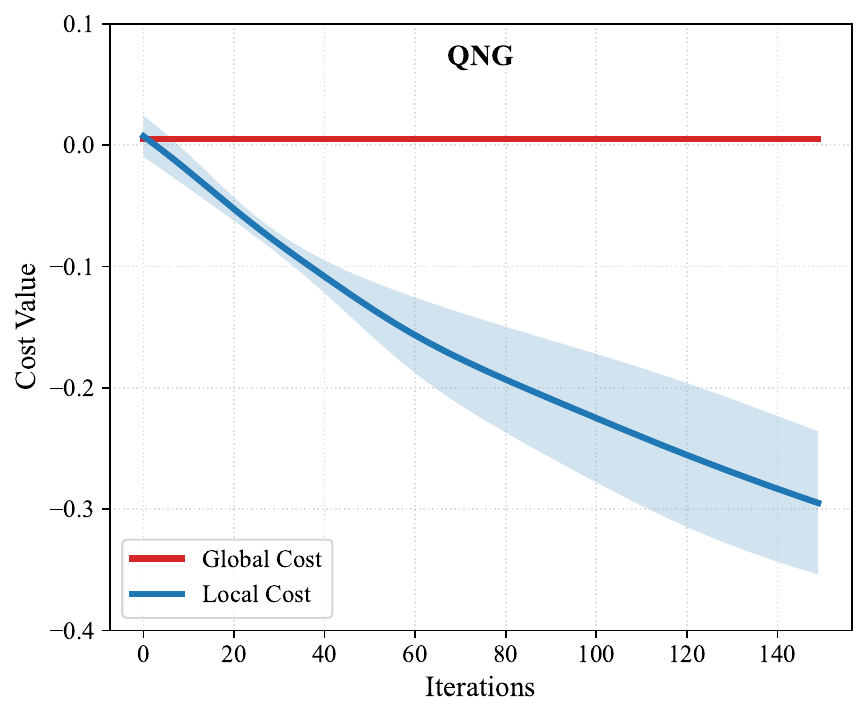}~~
    \caption{\textbf{Finite-shot optimization trajectories at $(n,L)=(18,8)$.}
Mean cost versus iteration for $C_{\mathrm{global}}$ (red) and $C_{\mathrm{local}}$ (blue): SGD (left) and QNG (right). Shading: $\pm 1$ standard deviation over random initializations.}
    \label{fig:optimization}
\end{figure}

\section{Conclusion}
\label{sec:conclusion}

In this study, we quantified the scaling of entry-wise Hessian variances in variational quantum algorithms. Using exact second-order parameter-shift identities, we expressed each Hessian entry as a constant-size linear combination of shifted objective evaluations, thereby yielding a finite covariance--quadratic representation for $\Var_\rho(H_{jk})$. This framework leads to two distinct scaling behaviors for the \emph{resolvability} of Hessian entries under finite-shot sampling. For global objectives satisfying a standard second-moment concentration condition, $\Var_\rho(H_{jk})$ decays exponentially in the number of qubits, and the corresponding resolution cost for Hessian-entry estimation grows exponentially with $n$. For term-wise $k$-local objectives in bounded-depth circuits, we obtain polynomial variance bounds governed by backward-lightcone growth on the interaction graph, which implies polynomial resolution cost whenever the lightcone remains sub-extensive. These conclusions are supported by numerical experiments across system sizes, circuit depths, and graph families.

Several avenues remain for future research. On the theory side, identifying conditions under which entry-wise resolvability correlates with informative spectral structure (e.g., conditioning or extremal eigenvalues) would clarify implications for Newton-type and preconditioned methods. On the practical side, extending the variance-based analysis to realistic noise models, including device noise and readout errors, is important for NISQ-relevant settings. More broadly, our results on derivative scaling inform higher-order quantum workflows in scientific computing, including quantum multiscale modeling~\cite{braun2025higher, chen2022qm, deiml2025quantum} and quantum machine learning~\cite{febrianto2025quantum, liu2025beyond}. Understanding resolvability limits in these contexts will be important for designing robust algorithms that rely on higher-order information.

\appendix

\section{Proofs}

\subsection{Covariance--quadratic representations from parameter shifts}
\label{app:cov_quad_ps}

\subsubsection{Proof of Lemma~\ref{lem:var_rep}}
\label{app:proof_var_rep}

\begin{proof}
All expectations, variances, and covariances are with respect to the initialization $\bth\sim\rho$.
Under Assumption~\ref{ass:obs}, $|C(\bth)|\le \|O\|\le 1$, so all second moments exist. By the diagonal second-order parameter-shift identity \eqref{eq:psr2-diag}, define
\[
C_{+}:=C(\bth+\pi\ej),\qquad C_{0}:=C(\bth),\qquad C_{-}:=C(\bth-\pi\ej).
\]
Then
\[
H_{jj}
=\frac14\big(C_{+}-2C_{0}+C_{-}\big)
=\sum_{\alpha\in\{+,0,-\}} w_\alpha\, C_\alpha,
\qquad
(w_+,w_0,w_-)=\tfrac14(1,-2,1).
\]
Using $\Var(\sum_\alpha w_\alpha C_\alpha)=\sum_{\alpha,\beta}w_\alpha w_\beta\,\Cov(C_\alpha,C_\beta)$ yields
\[
\Var(H_{jj})
=\sum_{\alpha,\beta\in\{+,0,-\}} w_\alpha w_\beta\,\Cov(C_\alpha,C_\beta)
= w^\top \Sigma w,
\]
where $\Sigma_{\alpha\beta}:=\Cov(C_\alpha,C_\beta)$. This proves \eqref{eq:var_Hjj_quad}. Expanding the quadratic form gives
\begin{align*}
\Var(H_{jj})
=\frac{1}{16}\Big(
&\Var(C_{+})+\Var(C_{-})+4\Var(C_{0})
+2\Cov(C_{+},C_{-})  \\
&-4\Cov(C_{+},C_{0})-4\Cov(C_{-},C_{0})
\Big),
\end{align*}
an explicit linear combination of variances and covariances among $\{C_+,C_0,C_-\}$, as stated in Lemma~\ref{lem:var_rep}.
\end{proof}

\subsubsection{Off-diagonal entries}
\label{app:offdiag_quad}

\begin{lemma}[Covariance--quadratic form for off-diagonal Hessian entries]
\label{lem:offdiag_quad}
Let $j\neq k$ and define the four shifted costs
\[
C_{\sigma\tau}:=C\Big(\bth+\sigma\tfrac{\pi}{2}\ej+\tau\tfrac{\pi}{2}\ek\Big),
\qquad \sigma,\tau\in\{+,-\}.
\]
Then \eqref{eq:psr2-off} implies
\[
H_{jk}=\frac14\Big(C_{++}-C_{+-}-C_{-+}+C_{--}\Big).
\]
Let $c:=(C_{++},C_{+-},C_{-+},C_{--})^\top$ and $w:=(1,-1,-1,1)^\top$, and set $\Sigma:=\Cov(c)\in\mathbb{R}^{4\times 4}$.
Then $\Var(H_{jk})=\frac1{16} w^\top \Sigma w$. Equivalently,
\begin{equation}
\label{eq:offdiag_var_16}
\Var(H_{jk})=\frac1{16}\sum_{\sigma,\tau\in\{+,-\}}\sum_{\sigma',\tau'\in\{+,-\}}
w_{\sigma\tau}\,w_{\sigma'\tau'}\Cov(C_{\sigma\tau},C_{\sigma'\tau'}),
\end{equation}
where $w_{++}=w_{--}=1$ and $w_{+-}=w_{-+}=-1$.
\end{lemma}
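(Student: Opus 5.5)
The argument mirrors the proof of Lemma~\ref{lem:var_rep} for the diagonal case. The plan is to rewrite $H_{jk}$ as a fixed linear combination of the four shifted costs, and then use bilinearity of covariance.

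First, I will substitute the notation $C_{\sigma\tau}$ directly into \eqref{eq:psr2-off}. The four terms there are exactly $C_{++}$, $C_{+-}$, $C_{-+}$, $C_{--}$, with signs $+,-,-,+$ and an overall factor $\tfrac14$. This gives $H_{jk}=\tfrac14(C_{++}-C_{+-}-C_{-+}+C_{--})=\tfrac14 w^\top c$. Identity \eqref{eq:psr2-off} is itself obtained by applying \eqref{eq:psr1} in $\theta_k$ and then in $\theta_j$. This is legitimate under Assumption~\ref{ass:gen} because $C$ is a trigonometric polynomial of degree one in each parameter separately, so the shift rules hold exactly and commute.

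Next, I will check integrability. Under Assumption~\ref{ass:obs}, $|C_{\sigma\tau}|\le\|O\|\le 1$ for every $\bth$, so each $C_{\sigma\tau}$ is a bounded random variable under $\rho$. Hence $\Sigma=\Cov(c)$ is well defined and finite.

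Finally, I will apply $\Var(a^\top c)=a^\top\Cov(c)\,a$ with $a=\tfrac14 w$. This yields $\Var(H_{jk})=\tfrac1{16}w^\top\Sigma w$. Writing the quadratic form out componentwise, with $w_{++}=w_{--}=1$ and $w_{+-}=w_{-+}=-1$, gives \eqref{eq:offdiag_var_16}.

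There is no real obstacle. The only point needing care is the consistent indexing of the four shifts together with their sign pattern. One may optionally note that shift-invariance of $\rho$ makes all diagonal entries of $\Sigma$ equal to $\Var_\rho C(\bth)$, but this is not needed for the identity.
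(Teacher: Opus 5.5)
Your proposal is correct and follows the same route as the paper: read off $H_{jk}=\tfrac14 w^\top c$ from \eqref{eq:psr2-off}, then apply $\Var(a^\top c)=a^\top\Cov(c)\,a$ with $a=\tfrac14 w$ (the paper writes $a=\tfrac14 s$, which is evidently a typo for your $a=\tfrac14 w$). Your extra remarks on the boundedness of the $C_{\sigma\tau}$ and on why the iterated shift rule is exact are harmless additions; the paper addresses these points only in the diagonal case.
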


\begin{proof}
The mixed second-order parameter-shift identity \eqref{eq:psr2-off} gives the stated linear combination of the four shifted costs.
For any random vector $c$ and deterministic $a$, $\Var(a^\top c)=a^\top \Cov(c)\,a$. Applying this with $a=\frac14 s$ and expanding the quadratic form gives \eqref{eq:offdiag_var_16}.
\end{proof}

\begin{remark}
\label{rem:finite_shift_covquad}
More generally, any differentiation scheme that represents a derivative quantity as a finite linear combination of shifted costs,
\[
X(\bth)=\sum_{\ell=1}^m w_\ell C(\bth\oplus s_\ell), \qquad m<\infty,
\]
admits the covariance--quadratic representation
\[
\Var_\rho(X)=\sum_{\ell,\ell'=1}^m w_\ell w_{\ell'}\Cov_\rho\big(C(\bth\oplus s_\ell),C(\bth\oplus s_{\ell'})\big).
\]
In our setting, the relevant shift sets have constant size $m=O(1)$ determined only by the chosen parameter-shift rule. This is the general form of \eqref{eq:var_quadratic_form_intro}.
\end{remark}

\subsection{Covariance bounds under bounded spreading}
\label{app:cov_bounds}

For an observable $O$, define its Heisenberg evolution under the circuit by $O(\bth) := U^{\dagger}(\bth)OU(\bth)$.
Under Assumptions~\ref{ass:gen}--\ref{ass:depth}, there exists a backward-lightcone radius $R_L=O(rL)$ such that conjugation expands the support of any $k$-local operator to at most an $R_L$-neighborhood on the interaction graph $G$.

\begin{lemma}[Finite-range independence and a uniform covariance bound]
\label{lem:covdecay}
Under Assumptions~\ref{ass:gen}, \ref{ass:depth}, and \ref{ass:obs}, and under the initialization model in
\S~\ref{sec:sub:assumptions}, let $A$ and $B$ be Hermitian operators with
$\|A\|\le 1$ and $\|B\|\le 1$. Let $\Lambda_A:=\supp(A)$ and $\Lambda_B:=\supp(B)$, and define the random variables
\[
X(\bth):=\langle 0|A(\bth)|0\rangle,
\qquad
Y(\bth):=\langle 0|B(\bth)|0\rangle,
\]
where $A(\bth)=U(\bth)^\dagger A\,U(\bth)$ and $B(\bth)=U(\bth)^\dagger B\,U(\bth)$.
Let $d:=\dist_G(\Lambda_A,\Lambda_B)$ be the graph distance between the supports. Then:
\begin{enumerate}
\item[\textup{(i)}] If $d>2R_{L}$, then $X(\bth)$ and $Y(\bth)$ are independent and hence
\begin{equation}
\label{eq:cov_zero}
\Cov_\rho(X,Y)=0.
\end{equation}
\item[\textup{(ii)}] For all $d$, one has
\begin{equation}
\label{eq:cov_uniform}
|\Cov_\rho(X,Y)|\le 1.
\end{equation}
\end{enumerate}
Consequently,
\begin{equation}
\label{eq:cov_indicator}
|\Cov_\rho(X,Y)| \le \mathbf{1}\{d\le 2R_{L}\}.
\end{equation}
\end{lemma}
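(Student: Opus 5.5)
The plan is to prove part (ii) first, since it is elementary, and then reduce part (i) to a statement about which parameters each random variable depends on. For (ii): for any $\bth$, $A(\bth)$ is unitarily conjugate to $A$, so $\|A(\bth)\|=\|A\|\le 1$ and hence $|X(\bth)|\le 1$. The same holds for $Y$. Then Cauchy--Schwarz gives $|\Cov_\rho(X,Y)|\le\sqrt{\Var_\rho(X)\Var_\rho(Y)}$. We also have $\Var_\rho(X)\le\E_\rho[X^2]\le 1$. This yields \eqref{eq:cov_uniform}.

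For (i), the key step is a light-cone dependence statement. Write $U(\bth)$ as the ordered product \eqref{eq:pqc}. Conjugate $A$ layer by layer, from the last layer back to the first. A gate $\exp(-i\theta_{\ell,j}G_{\ell,j})$ (or a factor of $V_\ell$) whose support is disjoint from the current support of the partially evolved operator commutes with it and cancels. A gate that touches the support enlarges it by at most the gate's support radius. Hence $A(\bth)$ depends only on the parameters $\Theta_A:=\{\theta_{\ell,j}:\supp(G_{\ell,j})\cap\mathcal L_A\neq\emptyset\}$. Here $\mathcal L_A$ is the backward lightcone of $\Lambda_A$, contained in the $R_L$-neighbourhood of $\Lambda_A$ by the definition of $R_L$ preceding the lemma.

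Moreover, $A(\bth)$ acts as $\tilde A(\Theta_A)\otimes I$, with $\tilde A$ supported on $\mathcal L_A$. Since $|0\rangle^{\otimes n}$ is a product state, $X(\bth)=\langle 0|_{\mathcal L_A}\tilde A(\Theta_A)|0\rangle_{\mathcal L_A}$ is a measurable function of $\Theta_A$ alone. The same argument shows $Y$ is a function of $\Theta_B$ alone.

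If $d>2R_L$, then $\mathcal L_A$ and $\mathcal L_B$ are disjoint. I would like to conclude $\Theta_A\cap\Theta_B=\emptyset$. Since $\rho$ is a product measure, disjoint coordinate sets are independent, so $X$ and $Y$ are independent and the covariance vanishes. The indicator bound \eqref{eq:cov_indicator} then just combines (i) and (ii).

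I expect the main obstacle to be showing that $\Theta_A\cap\Theta_B=\emptyset$. Disjoint lightcones alone do not rule out a single gate of support up to $r$ straddling both cones. I would handle this by defining $R_L$ so that it already absorbs one gate width; for instance, define the lightcone as the union of the supports of all gates that touch it. Then each relevant gate's support lies inside the $R_L$-neighbourhood, and $d>2R_L$ forces every gate in $\Theta_A$ to be disjoint from every gate in $\Theta_B$. If the paper's $R_L=O(rL)$ is meant loosely, I would state the precise convention $R_L:=\sum_\ell(\text{radius of layer }\ell)$ with $V_\ell$ of bounded range. Under that convention, $d>2R_L$ makes the two gate sets disjoint, and hence the parameter sets disjoint, since there is no parameter sharing.

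A secondary point is measurability and the identification of the reduced expectation as a function of $\Theta_A$ only. This is routine, because $X$ is a polynomial in the sines and cosines of the parameters in $\Theta_A$.
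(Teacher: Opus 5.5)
Your proposal is correct and follows essentially the same route as the paper: part (ii) via norm preservation under unitary conjugation plus Cauchy--Schwarz, and part (i) via backward-lightcone dependence of $X$ and $Y$ on disjoint parameter subsets combined with the product structure of $\rho$. You make explicit two conventions the paper leaves implicit: that the lightcone absorbs the full support of every gate that acts on the evolving support, and that each $V_\ell$ has bounded range. With these, your argument justifies the step the paper only asserts, namely that disjoint lightcones give disjoint parameter sets.
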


\begin{proof}
By Assumptions~\ref{ass:gen}--\ref{ass:depth}, $A(\bth)$ is supported within the $R_L$-neighborhood of $\Lambda_A$ and
$B(\bth)$ is supported within the $R_L$-neighborhood of $\Lambda_B$. Equivalently, $X(\bth)$ depends only on the subset
of parameters associated with gates whose qubit supports intersect the backward lightcone of $\Lambda_A$, and likewise
$Y(\bth)$ depends only on parameters within the backward lightcone of $\Lambda_B$.

If $d>2R_L$, these two backward lightcones are disjoint, hence the corresponding parameter subsets are disjoint.
Under the initialization model in \S~\ref{sec:sub:assumptions}, parameters on disjoint subsets are independent, so
$X(\bth)$ and $Y(\bth)$ are independent, proving \eqref{eq:cov_zero}.

For \eqref{eq:cov_uniform}, unitary conjugation preserves operator norm, so $\|A(\bth)\|=\|A\|\le 1$ and
$\|B(\bth)\|=\|B\|\le 1$. Therefore $|X(\bth)|\le 1$ and $|Y(\bth)|\le 1$, implying
$\Var_\rho(X)\le 1$ and $\Var_\rho(Y)\le 1$. Cauchy--Schwarz gives
$|\Cov_\rho(X,Y)|\le \sqrt{\Var_\rho(X)\Var_\rho(Y)}\le 1$, proving \eqref{eq:cov_uniform}.
Finally, \eqref{eq:cov_indicator} is the immediate combination of (i) and (ii).
\end{proof}

\begin{remark}
Lemma~\ref{lem:covdecay} uses only bounded operator spreading and independence of initialization parameters.
It yields a hard-cutoff correlation structure: beyond distance $2R_L$ the covariance vanishes, while for $d\le 2R_{L}$
we only use the trivial bound $|\Cov_\rho(X,Y)|\le 1$. No nontrivial decay inside the lightcone is claimed.
\end{remark}

\subsection{A dependency-graph variance bound}
\label{app:depgraph}

\begin{lemma}[Variance bound via a dependency graph]
\label{lem:depgraph_var}
Let $\{X_i\}_{i=1}^n$ be real-valued random variables with finite second moments, and let
$G_{\rm dep}=(\{1,\ldots,n\},E)$ be a dependency graph in the following sense:
for any disjoint index sets $I,J\subseteq\{1,\ldots,n\}$ with no edges between $I$ and $J$,
the families $\{X_i\}_{i\in I}$ and $\{X_j\}_{j\in J}$ are independent.
Let $\Delta_{\rm dep}$ be the maximum degree of $G_{\rm dep}$.
Then
\begin{equation}
\label{eq:depgraph_var_bound}
\Var\!\Big(\sum_{i=1}^n X_i\Big)
\;\le\; (\Delta_{\rm dep}+1)\sum_{i=1}^n \Var(X_i).
\end{equation}
Consequently, for $Z:=\frac1n\sum_{i=1}^n X_i$,
\begin{equation}
\label{eq:depgraph_var_bound_avg}
\Var(Z)\;\le\;\frac{\Delta_{\rm dep}+1}{n^2}\sum_{i=1}^n \Var(X_i).
\end{equation}
\end{lemma}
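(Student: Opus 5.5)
The plan is to expand the variance of the sum bilinearly and use the dependency-graph property to kill every covariance between non-adjacent indices. Write
\[
\Var\Big(\sum_{i=1}^n X_i\Big)=\sum_{i,j=1}^n \Cov(X_i,X_j),
\]
which is legitimate because all second moments are finite. For $i\neq j$ with no edge between $i$ and $j$, apply the dependency-graph hypothesis with the disjoint singletons $I=\{i\}$ and $J=\{j\}$. This gives that $X_i$ and $X_j$ are independent, hence $\Cov(X_i,X_j)=0$. Only the diagonal terms and the pairs $(i,j)$ with $j\in N(i)$, the neighbourhood of $i$ in $G_{\rm dep}$, survive:
\[
\Var\Big(\sum_i X_i\Big)=\sum_i \Var(X_i)+\sum_i\sum_{j\in N(i)}\Cov(X_i,X_j).
\]

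Next I bound each surviving covariance by Cauchy--Schwarz followed by AM--GM:
\[
|\Cov(X_i,X_j)|\le \sqrt{\Var(X_i)\Var(X_j)}\le \tfrac12\big(\Var(X_i)+\Var(X_j)\big).
\]
Summing over ordered adjacent pairs $(i,j)$, the contribution $\tfrac12\Var(X_i)$ appears once for each neighbour $j$ of $i$, so at most $\Delta_{\rm dep}$ times. The contribution $\tfrac12\Var(X_j)$ likewise appears at most $\Delta_{\rm dep}$ times, by the symmetry of the edge relation. Hence the off-diagonal sum is at most $\Delta_{\rm dep}\sum_i\Var(X_i)$. Adding the diagonal gives \eqref{eq:depgraph_var_bound}.

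The averaged bound \eqref{eq:depgraph_var_bound_avg} then follows from $\Var(Z)=n^{-2}\Var(\sum_i X_i)$.

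There is no real obstacle here. The only point requiring care is the symmetric double-counting in the AM--GM step, to make sure the constant is $\Delta_{\rm dep}+1$ rather than $2\Delta_{\rm dep}+1$. An alternative that avoids AM--GM is to write the bound as $\sum_i\sqrt{\Var X_i}\sum_{j\in\{i\}\cup N(i)}\sqrt{\Var X_j}$ and apply Schur's test to the adjacency-plus-identity matrix, whose spectral norm is at most $\Delta_{\rm dep}+1$. I would use the AM--GM route as the primary argument.
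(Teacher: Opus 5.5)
Your proof is correct and follows essentially the same route as the paper: you kill non-adjacent covariances via independence of singletons, bound adjacent ones by Cauchy--Schwarz and AM--GM, and count each $\Var(X_i)$ exactly $\deg(i)\le\Delta_{\rm dep}$ times. Summing over ordered adjacent pairs instead of the paper's $2\sum_{i<j}$ over edges is only a bookkeeping difference, and your count correctly gives the constant $\Delta_{\rm dep}+1$.
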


\begin{proof}
Expand the variance:
\[
\Var\!\Big(\sum_{i=1}^n X_i\Big)
=\sum_{i=1}^n \Var(X_i)+2\sum_{1\le i<j\le n}\Cov(X_i,X_j).
\]
If $\{i,j\}\notin E$, then there is no edge between $\{i\}$ and $\{j\}$, hence $X_i$ and $X_j$ are independent
by the dependency-graph property, so $\Cov(X_i,X_j)=0$.
Therefore,
\[
\Var\!\Big(\sum_{i=1}^n X_i\Big)
=\sum_{i=1}^n \Var(X_i)+2\sum_{\{i,j\}\in E,\,i<j}\Cov(X_i,X_j).
\]
Taking absolute values and applying Cauchy--Schwarz gives
\[
\big|\Cov(X_i,X_j)\big|\le \sqrt{\Var(X_i)\Var(X_j)} \le \tfrac12\big(\Var(X_i)+\Var(X_j)\big).
\]
Hence,
\begin{align*}
\Var\!\Big(\sum_{i=1}^n X_i\Big)
&\le \sum_{i=1}^n \Var(X_i)+2\sum_{\{i,j\}\in E,\,i<j} \tfrac12\big(\Var(X_i)+\Var(X_j)\big)\\
&= \sum_{i=1}^n \Var(X_i)+\sum_{\{i,j\}\in E}\big(\Var(X_i)+\Var(X_j)\big).
\end{align*}
Each $\Var(X_i)$ appears in the edge-sum exactly $\deg(i)$ times, hence
\[
\sum_{\{i,j\}\in E}\big(\Var(X_i)+\Var(X_j)\big)=\sum_{i=1}^n \deg(i)\,\Var(X_i)
\le \Delta_{\rm dep}\sum_{i=1}^n \Var(X_i).
\]
Combining these yields~\eqref{eq:depgraph_var_bound}, and~\eqref{eq:depgraph_var_bound_avg} follows by dividing by $n^2$.
\end{proof}

\subsection{Proof of Theorem~\ref{thm:scaling-explicit}}
\label{app:proof_scaling_explicit}

\begin{proof}
All expectations, variances, and covariances are with respect to the initialization law $\bth\sim\rho$; write
$\Var\equiv\Var_\rho$ and $\Cov\equiv\Cov_\rho$.

\paragraph{Step 1: Global objectives}
Assume $|\supp(O)|=\Theta(n)$ and Assumption~\ref{ass:GlobalDesign} holds. By shift-invariance of $\rho$ on the torus
(see Lemma~\ref{lem:shift_uniform}), for every fixed shift $s$ in a finite shift set,
\[
\Var\!\big(C(\bth\oplus s)\big)=\Var\!\big(C(\bth)\big)\le c_{\rm g}(r,L)\,e^{-\alpha(r,L)n}.
\]
Then for any $s,s'$ in the same finite set, Cauchy--Schwarz yields
\[
\big|\Cov(C(\bth\oplus s),C(\bth\oplus s'))\big|
\le \sqrt{\Var(C(\bth\oplus s))\Var(C(\bth\oplus s'))}
\le c_{\rm g}(r,L)\,e^{-\alpha(r,L)n}.
\]
Substituting this uniform bound into \eqref{eq:psr2-off} and using that $|\mathcal S_{ab}|=O(1)$ gives
\[
\Var(H_{jk})\le \tilde c(r,L)\,e^{-\tilde\alpha(r,L)n},
\]
for constants depending only on $(r,L)$ and the chosen shift rule (via $\{a_s\}$ and $|\mathcal S_{ab}|$). This proves
\eqref{eq:global_var_bound}.

\paragraph{Step 2: Term-wise local averaged objectives (polynomial scaling)}
Assume the objective is of the term-wise local averaged form
\[
C(\bth)=\frac1n\sum_{v=1}^n \langle 0|\,O_v(\bth)\,|0\rangle,
\qquad
O_v(\bth):=U(\bth)^\dagger O_v U(\bth),
\qquad \|O_v\|\le 1,
\]
where each $O_v$ is supported on at most $k$ qubits (with $k$ independent of $n$). Fix any shift $s$.
Define the local random variables
\[
X_v^{(s)}(\bth):=\langle 0|\,U(\bth\oplus s)^\dagger O_v U(\bth\oplus s)\,|0\rangle,
\qquad
C(\bth\oplus s)=\frac1n\sum_{v=1}^n X_v^{(s)}(\bth).
\]
Shifting parameters does not change the circuit geometry, hence the backward-lightcone radius remains $O(rL)$ for every $s$.
By bounded spreading (Assumptions~\ref{ass:gen}--\ref{ass:depth}), $X_v^{(s)}$ depends only on parameters in the
backward lightcone of $\supp(O_v)$, whose qubit support is contained in a $(k+rL)$-neighborhood. Therefore, if
$\dist_G(\supp(O_v),\supp(O_w))>k+2rL$, the parameter sets influencing $X_v^{(s)}$ and $X_w^{(s')}$ are disjoint,
and the two variables are independent (Lemma~\ref{lem:covdecay}).

Construct a dependency graph on vertices $\{1,\dots,n\}$ by connecting $v\sim w$ whenever
$\dist_G(\supp(O_v),\supp(O_w))\le k+2rL$. Its maximum degree is bounded by
$\Delta_{\rm dep}\le V_G(k+2rL)-1$. Since $|X_v^{(s)}|\le 1$, we have $\Var(X_v^{(s)})\le 1$.
Applying Lemma~\ref{lem:depgraph_var} to $\sum_v X_v^{(s)}$ yields the uniform bound (in $s$)
\begin{equation}
\label{eq:A4_var_shifted_cost}
\Var\!\big(C(\bth\oplus s)\big)
=\Var\!\Big(\frac1n\sum_{v=1}^n X_v^{(s)}\Big)
\le \frac{\Delta_{\rm dep}+1}{n^2}\sum_{v=1}^n \Var(X_v^{(s)})
\le \frac{V_G(k+2rL)}{n}.
\end{equation}
The same construction applies to cross-covariances between two shifts $s$ and $s'$: independence still holds whenever
$\dist_G(\supp(O_v),\supp(O_w))>k+2rL$, so the same dependency-graph argument gives
\begin{equation}
\label{eq:A4_cov_shifted_cost}
\big|\Cov(C(\bth\oplus s),C(\bth\oplus s'))\big|\le \frac{V_G(k+2rL)}{n},
\qquad \forall s,s' \text{ in a fixed finite shift set}.
\end{equation}

Finally, substitute \eqref{eq:A4_cov_shifted_cost} into the covariance--quadratic representation
\eqref{eq:psr2-off}. Since $|\mathcal S_{ab}|=O(1)$ and the coefficients $\{a_s\}$ are constants determined
by the shift rule, we obtain
\[
\Var(H_{jk})\le c_{\rm loc}(k,r,L)\,\frac{V_G(k+2rL)}{n},
\]
for some constant $c_{\rm loc}(k,r,L)>0$. This proves \eqref{eq:local_bound}. The lattice-growth specialization
\eqref{eq:local_bound_poly_growth} follows immediately from $V_G(m)=O(m^D)$.
\end{proof}

\section{Supplementary}

\subsection{Uniformity over finite shift sets}
\label{app:global_shift_uniform}

\begin{lemma}[Uniform concentration over finite shift sets]
\label{lem:shift_uniform}
Let $\mathcal S\subset\R^M$ be any finite set of shifts and interpret parameters on the torus
$\mathbb T^M=(\R/2\pi\Z)^M$.
Suppose the initialization distribution $\rho$ on $\mathbb T^M$ is shift-invariant with respect to $\mathcal S$, i.e.,
\begin{equation}
\label{eq:rho_shift_invariant}
\bth\sim\rho  \Longrightarrow  \bth\oplus s\sim\rho,\qquad \forall s\in\mathcal S,
\end{equation}
where $\oplus$ denotes addition modulo $2\pi$ componentwise.
Then Assumption~\ref{ass:GlobalDesign} implies
\begin{equation}
\label{eq:global_concentration_shifts}
\sup_{s\in \mathcal S}\Var_\rho\big[C(\bth\oplus s)\big]\le c_{\rm g}(r,L)e^{-\alpha(r,L)n}.
\end{equation}
\end{lemma}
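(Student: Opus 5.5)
The plan is to show that shifting by any $s\in\mathcal S$ leaves the law of $C(\bth\oplus s)$ unchanged, so the variance bound of Assumption~\ref{ass:GlobalDesign} transfers verbatim. The only subtlety is to work on the torus, so that $C$ is a well-defined function there and the shift is an honest change of variables.

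First I would check that $C$ descends to a function on $\mathbb T^M$. Under Assumption~\ref{ass:gen} we have $G_{\ell,j}^2=\tfrac14 I$, which gives
\[
\exp(-i\theta G_{\ell,j})=\cos(\theta/2)I-2i\sin(\theta/2)G_{\ell,j}.
\]
Replacing $\theta$ by $\theta+2\pi$ multiplies this gate by $-1$. That global sign cancels between $U$ and $U^\dagger$ in \eqref{eq:cost}, so $C$ is $2\pi$-periodic in each coordinate. Hence $C(\bth\oplus s)=C(\bth+s)$ is unambiguous, and $C$ is measurable as a function on $\mathbb T^M$. It is moreover bounded, since $|C|\le\|O\|\le1$ by Assumption~\ref{ass:obs}.

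Next, fix $s\in\mathcal S$ and let $\tau_s(\bth)=\bth\oplus s$. By \eqref{eq:rho_shift_invariant} the pushforward $(\tau_s)_\#\rho$ equals $\rho$. For every bounded measurable $f$, the change of variables then gives
\[
\E_\rho[f(\bth\oplus s)]=\int f\,d(\tau_s)_\#\rho=\E_\rho[f(\bth)].
\]
Applying this with $f=C$ and with $f=C^2$ shows that the first and second moments of $C(\bth\oplus s)$ coincide with those of $C(\bth)$. Consequently $\Var_\rho[C(\bth\oplus s)]=\Var_\rho[C(\bth)]$.

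Finally, Assumption~\ref{ass:GlobalDesign} bounds the right-hand side by $c_{\rm g}(r,L)e^{-\alpha(r,L)n}$, and this bound does not depend on $s$. Taking the supremum over the finite set $\mathcal S$ yields \eqref{eq:global_concentration_shifts}.

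For the product uniform measure of \S~\ref{sec:sub:assumptions}, the hypothesis \eqref{eq:rho_shift_invariant} holds for all $s$, because the Haar measure on $\mathbb T^M$ is translation-invariant. The only step with real content is the periodicity check; everything else is a direct computation.
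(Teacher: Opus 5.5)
Your proposal is correct and follows the paper's proof: shift-invariance makes $C(\bth\oplus s)$ and $C(\bth)$ equidistributed under $\rho$, so their variances coincide, and the $s$-independent bound of Assumption~\ref{ass:GlobalDesign} then carries over to the supremum over $\mathcal S$. Your periodicity check is sound and is a useful addition, since the paper leaves it implicit in ``interpret parameters on the torus'': the $2\pi$ shift multiplies a gate by $-1$, and this phase cancels between $U$ and $U^\dagger$.
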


\begin{proof}
Fix $s\in\mathcal S$. By \eqref{eq:rho_shift_invariant}, the random variables
$C(\bth)$ and $C(\bth\oplus s)$ have the same distribution under $\rho$, hence
$\Var_\rho[C(\bth\oplus s)]=\Var_\rho[C(\bth)]$. Taking the supremum over $s\in\mathcal S$
yields \eqref{eq:global_concentration_shifts}.
\end{proof}

If $\rho$ is the product of i.i.d.\ uniform measures on $[0,2\pi)$,
then \eqref{eq:rho_shift_invariant} holds for any finite shift set $\mathcal S$ arising in parameter-shift rules.

\subsection{A sufficient design-based route for global concentration}
\label{app:design_route}

\paragraph{How to verify Assumption~\ref{ass:GlobalDesign}}
Assumption~\ref{ass:GlobalDesign} can be supported (i) theoretically via design/ mixing conditions, and/or
(ii) empirically by Monte Carlo sampling $\bth^{(m)}\sim\rho$ and estimating $\Var_\rho[C(\bth)]$ from the sample variance.
By Lemma~\ref{lem:shift_uniform}, the same empirical test applies uniformly over the finite shift set used in
parameter-shift rules.

\begin{proposition}[A sufficient condition for Assumption~\ref{ass:GlobalDesign}]
\label{prop:2design_implies_global}
Let $d=2^n$. Assume the induced state ensemble
$\{|\psi(\bth)\rangle:=U(\bth)|0\rangle\}_{\bth\sim\rho}$
forms an exact complex projective $2$-design on $\mathbb{C}^d$.
Then for any Hermitian observable $O$,
\begin{equation}
\label{eq:haar_var_formula}
\Var_\rho\big[\langle\psi(\bth)|O|\psi(\bth)\rangle\big]
=\frac{\Tr(O^2)-\Tr(O)^2/d}{d(d+1)}.
\end{equation}
In particular, if $\|O\|\le 1$ then $\Tr(O^2)\le d$, hence
\begin{equation}
\label{eq:haar_var_bound}
\Var_\rho\big[C(\bth)\big]\le \frac{1}{d+1}=\mathcal{O}(2^{-n}),
\end{equation}
so Assumption~\ref{ass:GlobalDesign} holds with $\alpha=\log 2$ up to constants.
\end{proposition}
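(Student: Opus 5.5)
The plan is the standard second-moment computation for a projective $2$-design. Write $C(\bth)=\Tr\!\big(O\,\Psi(\bth)\big)$ with $\Psi(\bth):=|\psi(\bth)\rangle\langle\psi(\bth)|$. The first moment is $\E_\rho[C]=\Tr(O\,\E_\rho[\Psi])$. The second moment is $\E_\rho[C^2]=\Tr\!\big((O\otimes O)\,\E_\rho[\Psi^{\otimes 2}]\big)$. Since the ensemble is an exact projective $2$-design, both averages coincide with their Haar values:
\[
\E_\rho[\Psi]=\frac{I}{d},\qquad \E_\rho[\Psi^{\otimes 2}]=\frac{I+\mathrm{SWAP}}{d(d+1)} .
\]
The second identity is the usual statement that the Haar average of $\Psi^{\otimes 2}$ is the normalized projector onto the symmetric subspace, which has dimension $d(d+1)/2$. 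The first identity follows from the second by taking a partial trace. I would simply cite these facts as the definition of a $2$-design, or invoke Schur's lemma on the symmetric subspace.

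Next I evaluate the two moments. The first gives $\E_\rho[C]=\Tr(O)/d$. For the second, I use $\Tr\!\big((O\otimes O)I\big)=\Tr(O)^2$ and the swap trick $\Tr\!\big((O\otimes O)\,\mathrm{SWAP}\big)=\Tr(O^2)$, so that $\E_\rho[C^2]=\big(\Tr(O)^2+\Tr(O^2)\big)/\big(d(d+1)\big)$. Subtracting the squared mean and combining over the common denominator $d^2(d+1)$ gives
\[
\frac{d\,\Tr(O)^2+d\,\Tr(O^2)-(d+1)\Tr(O)^2}{d^2(d+1)}=\frac{\Tr(O^2)-\Tr(O)^2/d}{d(d+1)},
\]
which is \eqref{eq:haar_var_formula}.

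For the bound \eqref{eq:haar_var_bound}, I use $\|O\|\le 1$. This makes every eigenvalue $\lambda_i$ of $O$ satisfy $\lambda_i^2\le 1$, so $\Tr(O^2)=\sum_i\lambda_i^2\le d$. The term $\Tr(O)^2/d$ is nonnegative and can be dropped. Together these give $\Var_\rho[C]\le d/\big(d(d+1)\big)=1/(d+1)\le 2^{-n}$. This is exactly Assumption~\ref{ass:GlobalDesign} with $c_{\rm g}=1$ and $\alpha=\log 2$, which are independent of $n$.

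No step is a genuine obstacle. The only point needing care is the normalization of the Haar second-moment operator, $(I+\mathrm{SWAP})/\big(d(d+1)\big)$, and the swap-trick identity. Both can be checked quickly by verifying that the trace of the averaged operator equals $1$.
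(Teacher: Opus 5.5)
Your proposal is correct and matches the paper's proof: both use the Haar-matching moments $\E[\Psi]=I/d$ and $\E[\Psi^{\otimes 2}]=(I+F)/(d(d+1))$, then the swap trick $\Tr((O\otimes O)F)=\Tr(O^2)$, then subtract the squared mean, and finally bound via $\Tr(O^2)\le d$ from $\|O\|\le 1$. Your explicit common-denominator algebra and the identification $c_{\rm g}=1$, $\alpha=\log 2$ simply spell out steps the paper leaves implicit.
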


\begin{proof}
Let $|\psi\rangle$ be distributed according to an exact projective $2$-design on $\mathbb{C}^d$.
By the defining second-moment property of a projective $2$-design, the second moment operator matches Haar:
\begin{equation}
\label{eq:second_moment}
\mathbb{E}\big[|\psi\rangle\langle\psi|^{\otimes 2}\big]
=\frac{I+F}{d(d+1)},
\end{equation}
where $F$ is the swap operator on $\mathbb{C}^d\otimes\mathbb{C}^d$.
Then
\[
\mathbb{E}\big[\langle\psi|O|\psi\rangle\big]
=\Tr\Big(O\mathbb{E}[|\psi\rangle\langle\psi|]\Big)
=\frac{\Tr(O)}{d},
\]
using $\mathbb{E}[|\psi\rangle\langle\psi|]=I/d$.
For the second moment, use the swap trick:
\[
\langle\psi|O|\psi\rangle^2
=\Tr\big((O\otimes O)|\psi\rangle\langle\psi|^{\otimes 2}\big),
\]
hence by~\eqref{eq:second_moment},
\[
\mathbb{E}\big[\langle\psi|O|\psi\rangle^2\big]
=\Tr\Big((O\otimes O)\frac{I+F}{d(d+1)}\Big)
=\frac{\Tr(O)^2+\Tr(O^2)}{d(d+1)}.
\]
Subtracting the squared mean $(\Tr(O)/d)^2$ yields~\eqref{eq:haar_var_formula}.
If $\|O\|\le 1$, then $\Tr(O^2)\le d\|O\|^2\le d$, giving~\eqref{eq:haar_var_bound}.
\end{proof}

We do not claim that modest-depth hardware-efficient ans\"atze form $2$-designs; the proposition is included only as a
verifiable sufficient route to global concentration when design/mixing guarantees are available.

\subsection{From entry-wise second moments to matrix norm bounds}
\label{app:norm_bounds}

We provide a proof of Lemma~\ref{lem:entrywise-to-norms} stated in the main text.

\begin{proof}[Proof of Lemma~\ref{lem:entrywise-to-norms}]
Let $H\in\mathbb{R}^{M\times M}$ be a random matrix with finite second moments. By definition, $\|H\|_F^2=\sum_{j,k=1}^M H_{jk}^2$, and taking expectations gives
\[
\E\|H\|_F^2=\sum_{j,k=1}^M \E[H_{jk}^2]
=\sum_{j,k=1}^M\Big(\Var(H_{jk})+(\E[H_{jk}])^2\Big).
\]
Since $\|H\|_2\le \|H\|_F$ pointwise, we have $\E\|H\|_2^2 \le \E\|H\|_F^2$.
Moreover, Jensen's inequality yields
\[
\E\|H\|_2 \le \E\|H\|_F \le \sqrt{\E\|H\|_F^2}.
\]
If $\Var(H_{jk})\le v_n$ and $\big|\E [H_{jk}]\big|\le \mu_n$ for all $j,k$, then
\[
\E\|H\|_F^2
=\sum_{j,k=1}^M\Big(\Var(H_{jk})+\big(\E[H_{jk}]\big)^2\Big)
\le M^2(v_n+\mu_n^2),
\]
and the spectral norm bounds above give
\[
\E\|H\|_2^2 \le M^2(v_n+\mu_n^2),
\qquad
\E\|H\|_2 \le M\sqrt{v_n+\mu_n^2}.
\]
This completes the proof.
\end{proof}

\vspace{1cm}

\bibliographystyle{siamplain}
\bibliography{bib}

\end{document}